\documentclass{article}

\usepackage{arxiv}

\usepackage[ruled,linesnumbered]{algorithm2e}
\usepackage{amsthm}
\usepackage{amsmath}
\usepackage{amsfonts,amssymb}
\usepackage{booktabs}
\usepackage{CJK}
\usepackage{multirow}
\usepackage{tabularx}
\usepackage{tikz,hyperref}
\usepackage{diagbox}
\usepackage{scalerel}
\usepackage{stackengine,wasysym}

\usepackage[utf8]{inputenc} % allow utf-8 input
\usepackage[T1]{fontenc}    % use 8-bit T1 fonts
\usepackage{hyperref}       % hyperlinks
\usepackage{url}            % simple URL typesetting
\usepackage{booktabs}       % professional-quality tables
\usepackage{amsfonts}       % blackboard math symbols
\usepackage{nicefrac}       % compact symbols for 1/2, etc.
\usepackage{microtype}      % microtypography
\usepackage{cleveref}       % smart cross-referencing
\usepackage{lipsum}         % Can be removed after putting your text content
\usepackage{graphicx}
\usepackage{doi}

\SetKwRepeat{Do}{do}{while}

\theoremstyle{plain}
\newtheorem{theorem}{Theorem}

\newtheorem{lemma}{Lemma}
\newtheorem{proposition}{Proposition}

\theoremstyle{definition}
\newtheorem{definition}{Definition}

\theoremstyle{definition}

\newtheorem{fact}{Fact}

\title{Distributed Quantum Discrete Logarithm Algorithm}

\newif\ifuniqueAffiliation
\ifuniqueAffiliation % Standard variant of author block
\author{ \href{https://orcid.org/0000-0000-0000-0000}{\includegraphics[scale=0.06]{orcid.pdf}\hspace{1mm}David S.~Hippocampus}\thanks{Use footnote for providing further
		information about author (webpage, alternative
		address)---\emph{not} for acknowledging funding agencies.} \\
	Department of Computer Science\\
	Cranberry-Lemon University\\
	Pittsburgh, PA 15213 \\
	\texttt{hippo@cs.cranberry-lemon.edu} \\
	\And
	\href{https://orcid.org/0000-0000-0000-0000}{\includegraphics[scale=0.06]{orcid.pdf}\hspace{1mm}Elias D.~Striatum} \\
	Department of Electrical Engineering\\
	Mount-Sheikh University\\
	Santa Narimana, Levand \\
	\texttt{stariate@ee.mount-sheikh.edu} \\
}
\else
\usepackage{authblk}

\author[1,2,4]{Renjie Xu}
\author[1,2,4]{Daowen Qiu \thanks{Corresponding author: \texttt{issqdw@mail.sysu.edu.cn}}}
\author[1]{Ligang Xiao}
\author[3,4]{Le Luo}
\author[5]{Xu Zhou}

\affil[1]{Institute of Quantum Computing and Computer Theory, School of Computer Science and Engineering, Sun Yat-sen University, Guangzhou, 510006, China.}
\affil[2]{The Guangdong Key Laboratory of Information Security Technology, Sun Yat-sen University, Guangzhou, 510006, China.}
\affil[3]{School of Physics and Astronomy, Sun Yat-sen University, Zhuhai, 519082, China}
\affil[4]{Shenzhen Research Institute of Sun Yat-Sen University, Shenzhen, 518057, China.}
\affil[5]{Yangtze Delta Industrial Innovation Center of Quantum Science and Technology, Suzhou 215000, China}

\fi

\renewcommand{\shorttitle}{Distributed Quantum Discrete Logarithm Algorithm}

\hypersetup{
pdftitle={Distributed Quantum Discrete Logarithm Algorithm},
pdfsubject={},
pdfauthor={},
pdfkeywords={Quantum Computing, Discrete Logarithm Problem, Distributed Algorithm},
}

\begin{document}
\maketitle

\begin{abstract}
    Solving the discrete logarithm problem (DLP) with quantum computers is a fundamental task with important implications. Beyond Shor’s algorithm, many researchers have proposed alternative solutions in recent years. However, due to current hardware limitations, the scale of DLP instances that can be addressed by quantum computers remains insufficient. To overcome this limitation, we propose a distributed quantum discrete logarithm algorithm that reduces the required quantum register size for solving DLPs. Specifically, we design a distributed quantum algorithm to determine whether the solution is contained in a given set. Based on this procedure, our method solves DLPs by identifying the intersection of sets containing the solution. Compared with Shor’s original algorithm, our approach reduces the register size and can improve the success probability, while requiring no quantum communication.
\end{abstract}

\section{Introduction}
Discrete Logarithm Problems (DLPs) \cite{DLP_1}, a classical puzzle in cryptography, have been shown to admit efficient quantum algorithms. However, practical implementations of quantum algorithms for DLPs are constrained by the hardware limitations of current quantum devices, in terms of both problem scale and achievable success probability. Therefore, since Shor proposed the first quantum discrete logarithm algorithm \cite{DLP_first}, many improved algorithms \cite{Exact_DLP_1, DLP_oracle_Magic_Box} have been developed to enhance the success probability or reduce the size of the control register.

To enhance the success probability of finding a discrete logarithm, two lines of work have been explored. Some researchers have proposed exact algorithms \cite{Exact_DLP_1} by introducing amplitude amplification \cite{AA}. With amplitude amplification, these works address the approximation error in the inverse Quantum Fourier Transform (QFT), thereby yielding an exact algorithm for the overall problem. The other line of work does not achieve exact results. In this line, an oracle is designed based on the half-bit approximation \cite{DLP_oracle}. Specifically, the success probability can be increased by improving the approximation to the ``half-bit'' of the DLP. Unfortunately, the half-bit problem can be very hard to approximate, because the relevant eigenvector resides in a superposition state.

To reduce the size of the control registers, most existing algorithms proceed in two ways, e.g., leveraging classical computation \cite{DLP_Classical} and reusing the input register \cite{DLP_oracle_Magic_Box, HSP}. With the support of classical computation, intermediate information can be stored classically, thus reducing the number of required qubits. Furthermore, including the oracle-based algorithms, researchers have noticed that information in registers can be reused to reduce redundant qubits. Specifically, the method of Kaliski et al. \cite{DLP_oracle_Magic_Box} for the half-bit approximation reduces the size of one register to a single qubit, and the method of Mosca et al. \cite{HSP} reduces one register by exploiting the structure of modular operations. However, both solutions rely on the assumption that one knows which eigenvector is present in the third register, while identifying this eigenvector could be no easier than solving the original problem.

Meanwhile, Distributed Quantum Algorithms (DQA) have been used to establish quantum advantage or reduce problem size without requiring difficult quantum-memory movement and addressing across locations \cite{DQA_possible}. Moreover, to reduce the complexity of Grover's algorithm, Qiu et al. \cite{DQA_Qiu} proposed a distributed version of Grover's algorithm. Tan et al. \cite{DQA_Tan} proposed a DQA for Simon's problem. Xiao et al. \cite{DQA_Xiao} designed a DQA to split Shor's algorithm into two computational nodes. More recently, Li et al. proposed two more distributed versions of Grover's algorithm \cite{DQA_Li} and a Distributed Generalized Simon's problem \cite{DQA_Li_2}. In particular, Qiu et al. \cite{DQA_ec} proposed a universal error-correction method for distributed quantum computing.

In our work, inspired by DQA, we propose a distributed algorithm for DLPs. To this end, we develop an approach that tests whether the solution lies in a given set. Moreover, we obtain the solution to the DLP by intersecting sets that contain the solution. In our algorithm, the success probability is lower bounded by $\mathcal{P}'= \Omega(e^{-\frac{p}{2^p}})$, where $p$ is the number of iterations. Let $m$ denote the size of the first register and let $n<m-1$ denote the size of the second register. The space complexity is $O(2m+n)$, which is smaller than $O(3m)$, the space complexity of Shor's discrete logarithm algorithm. Moreover, the time complexity to find the final solution with $K$ QPUs is $O(m^5n\sqrt{2^n}/K)$. Finally, our algorithm requires only classical communication, with classical communication complexity $O(\log_2(r\log_2 r))$, where $r$ is the (multiplicative) order of the integer $a$ modulo $N$, i.e., the smallest positive integer such that $a^r = 1 \pmod{N}$. The major contributions of this paper are summarized as follows:
\begin{itemize}
    \item We design a distributed quantum discrete logarithm algorithm by testing whether the solution is contained in a given set.
    \item We develop a dichotomy-like strategy that determines the final solution by iteratively shrinking the candidate set.
    \item We analyze the properties of our algorithm and show that, compared with the original algorithm, our distributed algorithm can reduce the qubit count and increase the success probability without any quantum communication.
\end{itemize}

The remainder of this paper is organized as follows: The discrete logarithm problem and its corresponding algorithm are introduced in Section 2. In Section 3, we introduce our proposed algorithm and a strategy to solve DLPs with our algorithm. After this, the correctness and success probability of our algorithm are shown in Section 4. Finally, the time, space, and communication complexities of our algorithm are analyzed in Section 5. Then, the numerical results are summarized in Section 6, and a short conclusion is presented in Section 7.

\section{Preliminary}
In this section, we introduce the Discrete Logarithm Problem (DLP) and the original algorithm to solve it. The discrete logarithm problem can be described as follows:
\begin{definition}
    Given $a,b,r,N \in \mathbb{Z}^+$ with $a^r=1 \pmod N$ and there exists $t \in \mathbb{Z}^+$ such that $b=a^t \pmod N$, then the problem of finding integer $t$ is called \textbf{Discrete Logarithm Problem}.
    \label{Former_Total_def}
\end{definition}
To solve the quantum discrete logarithm algorithm, the property of the function $f(x_1, x_2) = a^{x_1}b^{x_2} \pmod N$ follows: 
\begin{fact}
    Given $a,b,r,N \in \mathbb{Z}^+$ with $a^r=1 \pmod N$, $\exists t \in \mathbb{Z}^+$ such that $b=a^t \pmod N$, and $f(x_1, x_2) = a^{x_1}b^{x_2} \pmod N$, then $t$ satisfies that $\forall l \in \mathbb{Z}^+, f(x_1-tl, x_2+l) = f(x_1, x_2) \pmod N$.
    \label{function_origin}
\end{fact}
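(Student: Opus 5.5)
The plan is a direct substitution into the definition of $f$, followed by the hypothesis $b = a^t \pmod N$. Working modulo $N$ throughout, I would write
\begin{equation*}
f(x_1 - tl,\, x_2 + l) = a^{x_1 - tl}\, b^{x_2 + l} = a^{x_1}\, b^{x_2} \cdot a^{-tl}\, b^{l} \pmod N .
\end{equation*}
It then suffices to show that the correction factor satisfies $a^{-tl} b^{l} \equiv 1 \pmod N$. Substituting $b \equiv a^t$ gives $b^l \equiv a^{tl}$, so $a^{-tl} b^l \equiv a^{-tl} a^{tl} = 1$, which yields $f(x_1 - tl, x_2 + l) \equiv f(x_1, x_2) \pmod N$ for every $l \in \mathbb{Z}^+$.

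The only step that needs care is making sense of the negative exponent $x_1 - tl$, which may be negative. This is where the hypothesis $a^r = 1 \pmod N$ enters.

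\begin{itemize}
\item Since $a^r \equiv 1$ with $r \ge 1$, we have $a \cdot a^{r-1} \equiv 1$. So $a$ is a unit modulo $N$ with inverse $a^{r-1}$.
\item Hence $a^{k}$ is well defined for all $k \in \mathbb{Z}$, and the rule $a^{k} a^{k'} = a^{k+k'}$ holds in $(\mathbb{Z}/N\mathbb{Z})^\times$.
\item Equivalently, one can read every exponent of $a$ modulo $r$: replace $x_1 - tl$ by a nonnegative representative of its class mod $r$, which changes nothing because $a^r \equiv 1$.
\item Since $b \equiv a^t$ is then also a unit, $b^{x_2+l}$ is unproblematic. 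In any case $x_2 + l$ is nonnegative whenever $x_2$ is.
\end{itemize}

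I would state this well-definedness remark first and then perform the one-line computation above. I do not expect any real obstacle; the only subtlety is being explicit that exponents live in $\mathbb{Z}_r$.
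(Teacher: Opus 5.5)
Your proposal is correct. The paper states this as a Fact without proof, and your substitution $a^{x_1-tl}b^{x_2+l}\equiv a^{x_1}b^{x_2}\,a^{-tl}a^{tl}\equiv a^{x_1}b^{x_2}\pmod N$ is exactly the implicit justification it relies on; your extra remark that $a$ is invertible modulo $N$ (with inverse $a^{r-1}$), so the possibly negative exponent $x_1-tl$ is meaningful, is a detail the paper leaves unstated and is worth keeping.
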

According to this fact, the quantum circuit of Shor's quantum algorithm \cite{DLP_first} for solving the discrete logarithm problem is shown in Figure \ref{DicreteLog_original}. This algorithm contains 3 registers, in which the upper two registers are applied to preserve obtained quantum information, while the third contains the ancilla qubits. For $m$-bit integer $0\leq z < N$, gates
\begin{equation}
\begin{split}
    M_a|z\rangle=|za \pmod{N}\rangle,\\
    M_b|z\rangle=|zb \pmod{N}\rangle,
\end{split}
\end{equation}
which are controlled by the first and second registers are applied to the third register and obtain
\begin{align}
    \Lambda(M_{a})&|x\rangle|z\rangle=|x\rangle M^x_{a}|z\rangle = |x\rangle|za^x \pmod N\rangle\\
    \Lambda(M_{b})&|y\rangle|z\rangle= |y\rangle M^y_{b}|z\rangle = |y\rangle|zb^y \pmod N\rangle,
\end{align}
where $x$ and $y$ are integers. Notice that the gates $M_{a}$ and $M_{b}$ share the same eigenvectors \cite{Kitaev95}
\begin{equation}
    |\psi_l\rangle = \frac{1}{\sqrt{r}} \sum^{r-1}_{k=0} \omega_r^{-lk}|a^k \pmod N\rangle,
\end{equation}
where $\omega^l_r = e^{2 \pi i l/ r}$, $i = \sqrt{-1}$, and $|a^k \pmod N\rangle$ is a $m$-bit decimal number. To simplify, we denote $\alpha^l = \omega^l_r$ and $\beta^l = \omega^{tl}_r$ as the eigenvalues which corresponding to $|\psi_l\rangle$ and omit $|a^k \pmod N\rangle = |a^k\rangle$ in the following. Also, the eigenvectors follow
\begin{equation}
    \frac{1}{\sqrt{r}}\sum_{l=0}^{r-1} |\psi_l\rangle = |0^{m-1}1\rangle,
\end{equation}
where $0^{m-1}1=00...01$ is decimal 1 which is represented with $m$-bit binary string.

After applying the controlled gates, with a large enough $m$ (e.g. $m=\lceil \log_2 r \rceil+\log_2 \frac{1}{\epsilon}$ suggested in \cite{QCQI}), the quantum states in the upper two quantum states become
\begin{equation}
    \begin{split}
    \Lambda^{2:3}(M_{b}) \Lambda^{1:3}(M_{a}) \frac{1}{2^m}  \sum_{x=0}^{2^m-1}| x\rangle \sum_{y=0}^{2^m-1}|y\rangle |0^{m-1}1\rangle
    =&\frac{1}{2^m \sqrt{r}} \sum_{x=0}^{2^m-1}| x\rangle \sum_{y=0}^{2^m-1}|y\rangle \sum_{l=0}^{r-1} \alpha^{lx} \beta^{ly} |\psi_l\rangle \\
    =&\frac{1}{2^m \sqrt{r}} \sum_{l=0}^{r-1} \sum_{x=0}^{2^m-1} e^{2 \pi i l x/ r}| x\rangle \sum_{y=0}^{2^m-1} e^{2 \pi i l t y/ r} |y\rangle  |\psi_l\rangle,
    \end{split}
\end{equation}
where $\Lambda^{i:j}$ represents that the control register is $i$ and gate is applied on register $j$ while acts trivially on others. After applying the inverse Quantum Fourier Transform $QFT^\dagger$, the quantum state becomes
\begin{equation}
\begin{split}
    &(QFT_{2^m}^\dagger \otimes QFT_{2^m}^\dagger \otimes I^{\otimes m}) \frac{1}{2^m \sqrt{r}} \sum_{l=0}^{r-1} \sum_{x=0}^{2^m-1} e^{2 \pi i l x/ r}| x\rangle \sum_{y=0}^{2^m-1} e^{2 \pi i l t y/ r} |y\rangle |\psi_l\rangle\\
    = &\frac{1}{\sqrt{r}}\sum_{l=0}^{r-1} |\widetilde{l/r}\rangle |\widetilde{tl/r}\rangle |\psi_l\rangle
\end{split}
\end{equation}
where $|\widetilde{l/r}\rangle|\widetilde{tl/r}\rangle$ is the approximation of $|l/r\rangle|tl/r\rangle$. Thus, the upper two registers are measured and we obtain classical value $\widetilde{l/r}$ and $\widetilde{tl/r}$. Finally, we obtain $t = \widetilde{tl/r}\left(\widetilde{l/r}\right)^{-1}$.

\begin{figure*}[h!]
    \centering
    \includegraphics[scale=0.8]{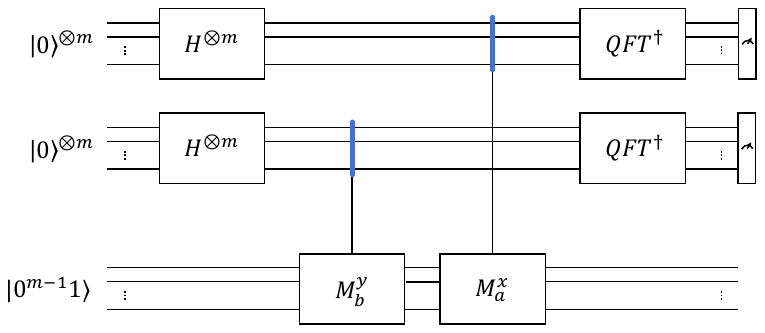}
    \caption{Original quantum discrete logarithm algorithm.}
    \label{DicreteLog_original}
\end{figure*}

\section{Our Algorithm}
The details of our distributed discrete logarithm algorithm are shown in this section. In general, the purpose of our algorithm is to determine whether the solution for the distributed discrete logarithm problem is contained in a specific set. To this approach, we first discuss how the combination of the eigenvectors will change if we remove some eigenvectors and how our algorithm removes specific eigenvectors. Finally, a brief process of our algorithm is organized at the end of the section.

\subsection{Quantum gates}
Before introducing our algorithm, we first introduce the quantum gates applied in our algorithm. The Hadamard gate can be represented as
\begin{equation}
    H = \frac{1}{\sqrt{2}}
    \begin{pmatrix}
    1   &1\\
    1   &-1\\
    \end{pmatrix}.
\end{equation}
And $QFT_K$ is the quantum Fourier transform
\begin{equation}
    QFT_K: |j\rangle \to \frac{1}{\sqrt{K}}\sum^{K-1}_k e^{2\pi i j k/K}|k\rangle \quad (0\leq j < K).
\end{equation}
Furthermore, for $0 \leq z < N$ control gates are also utilized in our algorithm:
\begin{align}
    \Lambda(M_b)|x\rangle|z\rangle &=|x\rangle|z b^x \pmod{N} \rangle, \\
    \Gamma_{\tau}(M_a)|x\rangle|y\rangle|z\rangle
    &=|x\rangle|y\rangle |z a^{-(x+\tau)y} \pmod{N} \rangle,
\end{align}
where $\tau \in \mathbb{Z}$. Actually, our newly defined gate $\Gamma_{\tau}(M_a)$ can be easily applied with existing gates $\Lambda(M_a)$ and $\Lambda(M_{a^\tau})$, 
and $\Xi(U):|s\rangle|x\rangle|z\rangle \to |s\rangle U^s(|x\rangle|z\rangle)$ via
\begin{equation}
    \Gamma_{\tau}(M^\dagger_{a})=(I^{\otimes n} \otimes \Lambda(M^\dagger_{a^\tau}) \Xi(\Lambda(M^\dagger_a)),
\end{equation}
which we will prove in the subsequent section.

For any $s,x \in \mathbb{N}$, with the combination of gates $\Lambda(M_b)$ and $\Gamma_{\tau}(M_a)$, we have
\begin{equation} \label{fucntion_state}
\begin{split}
    \Lambda^{2:3}(M_{b})\Gamma^{1,2:3}_{\tau}(M_a)|s\rangle |x\rangle |0^{m-1}1\rangle
    = &\Lambda^{2:3}(M_{b})|s\rangle |x\rangle |a^{-(s+\tau)x} \pmod N \rangle \\ 
    = &|s\rangle |x\rangle |a^{-(s+\tau)x} b^{x} \pmod N \rangle \\
    = &|s\rangle |x\rangle |\hat{f}_{s+\tau}(x)\rangle,
\end{split}
\end{equation}
where $\hat{f}_s(x) = a^{-sx}b^{x} \pmod N$. Moreover, another controlled gate is involved to realize the $g(x)$, which will be further introduced in the next part:
\begin{equation}
    U_{g}: |x\rangle |0\rangle \mapsto |x\rangle (\sqrt{1-g^2(x)}|0\rangle+g(x)|1\rangle),
\end{equation}
where $g:\{0,1\}^m\to\{0,1\}$, $m$ is the length of $x$, and 
\begin{equation}
    g(x)=\begin{cases}
        1, &x=0 \\
        0, &x=1,2,...2^m-1.
    \end{cases}
\end{equation}
This controlled gate helps us to distinguish the kept eigenvectors from others by setting the corresponding ancilla qubit to $|1\rangle$.

\subsection{The superposition of eigenvectors}
According to the former introduced Def. \ref{Former_Total_def}, a function $\hat{f}_s(x) = a^{-sx}b^{x} \pmod N$ can be applied to solve the DLPs. It can be recognized that
\begin{equation}
\begin{split}
    \Gamma_{\tau}(M_a) |s\rangle|x\rangle |\psi_l\rangle &= \Gamma_{\tau}(M_a) |s\rangle|x\rangle \frac{1}{\sqrt{r}} \sum^{r-1}_{k=0} \omega_r^{-lk}|a^k\rangle \\
    &= |s\rangle|x\rangle \frac{1}{\sqrt{r}} \sum^{r-1}_{k=0} \omega_r^{-lk}|a^{k+(s+\tau)x} \pmod{N}\rangle= \omega^{-(s+\tau)xl}_r|s\rangle|x\rangle |\psi_l\rangle,\\
    \Lambda(M_b) |x\rangle |\psi_l\rangle &= \Lambda(M_b) |x\rangle \frac{1}{\sqrt{r}} \sum^{r-1}_{k=0} \omega_r^{-lk}|a^k\rangle\\
    &= |x\rangle \frac{1}{\sqrt{r}} \sum^{r-1}_{k=0} \omega_r^{-lk}|a^k b^x \pmod{N}\rangle = \omega_r^{xl} |x\rangle |\psi_l\rangle.
\end{split}
\end{equation}
The eigenvectors of $\Gamma_{\tau}(M_a)$ and $\Lambda(M_b)$ are
\begin{equation} \label{eigenvectors}
\begin{split}
    |s\rangle|x\rangle |\psi_l\rangle &= |s\rangle|x\rangle\frac{1}{\sqrt{r}} \sum^{r-1}_{k=0} \omega_r^{-lk}|a^k\rangle,\\
    |x\rangle |\psi_l\rangle &= |x\rangle \frac{1}{\sqrt{r}} \sum^{r-1}_{k=0} \omega_r^{-lk}|a^k\rangle,
\end{split}
\end{equation}
where $\omega^l_r = e^{2 \pi i l/ r}$, $k\in\mathbb{N}$ and $k<r$. 

Because the quantum state $|\hat{f}_s(x)\rangle$ can be described as the superposition state of the eigenvectors. Also, knowing the superposition state of the eigenvectors is $|0^{m-1}1\rangle$. According to Eq. \ref{fucntion_state}, for a known $s$, we write the quantum state
\begin{equation} \label{function_final}
\begin{split}
    \frac{1}{\sqrt{2^m}}\sum^{2^m-1}_{x=0} |s\rangle|x\rangle|\hat{f}_{s+\tau}(x)\rangle =& \frac{1}{\sqrt{2^m}} \sum^{2^m-1}_{x=0} \Lambda(M_b)\Gamma_{\tau}(M_a)|s\rangle|x\rangle|0^{m-1}1\rangle \\
    =& \frac{1}{\sqrt{2^m}} \sum^{2^m-1}_{x=0} \Lambda^{2:3}(M_b)\Gamma^{1,2:3}_{\tau}(M_a)|s\rangle|x\rangle \frac{1}{\sqrt{r}}\sum_{l=0}^{r-1} |\psi_l\rangle \\
    =& \frac{1}{\sqrt{2^mr}}\sum^{r-1}_{l=0} \sum^{2^m-1}_{x=0} e^{2 \pi i (t-\tau-s) l x/ r} |s\rangle |x\rangle |\psi_l\rangle.
\end{split}
\end{equation}

As mentioned above, the purpose of our algorithm is to distinguish the set which contains the solution. Thus, we aim to make the quantum states corresponding to the solution different from others. If $\hat{f}_s(x)$ is a constant function then the corresponding $s$ is the required answer because $\hat{f}_t(x)=1 \pmod{N}$. And the Fourier Transform of a constant function is $0$. Thus, we apply the function $g(x)$ to distinguish whether the result is $0$.

Then, for a constant integer $\tau$ which represents a start point and $2^n<r$ as the size of set, we denote 
\begin{equation}
    S_{n,\tau} = \{\tau+s \pmod{r}| s= 0,1,2,...,2^n-1 \}
\end{equation}
and expression 
\begin{equation} \label{double_combine}
    \sum_{s \in S_{n,\tau}} \sum_{l=0}^{r-1} (-1)^s g((t-s)l) |\psi_l\rangle
\end{equation}
represents a superposition state with eigenvectors of $\hat{f}_s(x)$ with different elements in set $S_{n,\tau}$. For further analysis, we conduct the following lemma to describe the property of the upper expression. 
\begin{lemma} \label{MAIN}
    Given set $S_{n,\tau} = \{\tau+s \pmod{r}| s= 0,1,2,...,2^n-1 \}$ where $2^n<r$ and $\tau$ is integer. Then, if $t \in S_{n,\tau}$, 
    \begin{equation}
        \sum_{s \in S_{n,\tau}} \sum_{l=0}^{r-1} (-1)^s g((t-s)l) |\psi_l\rangle = \frac{(-1)^{t+1}}{\sqrt{r}} \sum_{k=1}^{r-1} |a^k\rangle + \left(\frac{(-1)^{t+1}}{\sqrt{r}}+(-1)^t \sqrt{r}\right) |0^{m-1}1\rangle.
    \end{equation}
    Otherwise, if $t \notin S_{n,\tau}$, $\sum_{s \in S_{n,\tau}} \sum_{l=0}^{r-1} (-1)^s g((t-s)l) |\psi_l\rangle =0$.
\end{lemma}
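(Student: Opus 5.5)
The plan is a direct computation. I split the double sum according to whether $s=t$ or $s\neq t$, use the behaviour of $g$ on $(t-s)l \pmod r$, and then rewrite the surviving eigenvectors through the two identities already recorded in the Preliminary: $\frac{1}{\sqrt r}\sum_{l=0}^{r-1}|\psi_l\rangle = |0^{m-1}1\rangle$, and the definition of $|\psi_0\rangle = \frac{1}{\sqrt r}\sum_{k=0}^{r-1}|a^k\rangle$. I read the argument of $g$ as the residue of $(t-s)l$ modulo $r$, since $g$ is applied to the phase index of $\psi_l$. Throughout I take $n\ge 1$, so that $2^n$ is even.

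\textbf{Step 1 (which $l$ survive).} For fixed $s\in S_{n,\tau}$, the term $g((t-s)l)$ equals $1$ exactly when $(t-s)l\equiv 0 \pmod r$.
\begin{itemize}
\item If $s\equiv t$, this holds for every $l$, so the inner sum is $\sum_{l=0}^{r-1}|\psi_l\rangle=\sqrt r\,|0^{m-1}1\rangle$.
\item If $s\not\equiv t$, then $l=0$ always survives. I claim no $l\in\{1,\dots,r-1\}$ survives. This needs $\gcd(t-s,r)=1$; it holds, for instance, when $r$ is prime, which is the standard setting of a cyclic group of prime order. In that case the inner sum collapses to $|\psi_0\rangle$.
\end{itemize}

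\textbf{Step 2 (case $t\notin S_{n,\tau}$).} Every $s$ is of the second kind, so the whole expression equals $\big(\sum_{s}(-1)^s\big)|\psi_0\rangle$. Since $S_{n,\tau}$ consists of $2^n$ consecutive indices $s=0,\dots,2^n-1$, there are equally many even and odd $s$ (because $2^n$ is even). The signs cancel and the result is $0$.

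\textbf{Step 3 (case $t\in S_{n,\tau}$).} The $s=t$ term contributes $(-1)^t\sqrt r\,|0^{m-1}1\rangle$. The remaining terms contribute $\big(\sum_{s\ne t}(-1)^s\big)|\psi_0\rangle$. By the cancellation from Step 2, $\sum_{s\ne t}(-1)^s = -(-1)^t = (-1)^{t+1}$. Expanding $|\psi_0\rangle=\frac{1}{\sqrt r}\big(|0^{m-1}1\rangle+\sum_{k=1}^{r-1}|a^k\rangle\big)$, using $a^0=1$, and collecting the coefficient of $|0^{m-1}1\rangle$ gives exactly
$$\frac{(-1)^{t+1}}{\sqrt r}\sum_{k=1}^{r-1}|a^k\rangle+\Big(\frac{(-1)^{t+1}}{\sqrt r}+(-1)^t\sqrt r\Big)|0^{m-1}1\rangle .$$
Here $(-1)^t$ is understood with $t$ labelled by its position index inside $S_{n,\tau}$, consistently with how $(-1)^s$ is indexed.

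\textbf{Main obstacle.} The delicate point is Step 1: showing that for $s\ne t$ only $l=0$ passes the filter $g$. If $r$ is composite and $\gcd(t-s,r)>1$, extra eigenvectors $|\psi_l\rangle$ with $l\in\frac{r}{\gcd(t-s,r)}\mathbb Z$ survive. In that situation I would first try to use $|t-s|<2^n<r$ together with a primality, or coprimality, assumption on $r$, and state that assumption explicitly. The parity bookkeeping (how $(-1)^s$ is indexed when reducing modulo $r$) is the only other point needing care. It is harmless as long as the signs are taken with respect to the offsets $0,\dots,2^n-1$.
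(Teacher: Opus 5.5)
Your argument is sound and is essentially the paper's proof. Both isolate the $s\equiv t$ term, which contributes $(-1)^t\sqrt r\,|0^{m-1}1\rangle$, collapse every other inner sum to $|\psi_0\rangle$, cancel the alternating signs over the $2^n$ offsets, and expand $|\psi_0\rangle$.

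The coprimality step you flag is exactly what the paper asserts without justification. It is genuinely needed, not a weakness of your argument. For $r=4$, $n=1$, $\tau=0$, $t=2$ the sum equals $|\psi_2\rangle\neq 0$. The paper's own example $r=35$, $t=23$, $\tau=0$, $n=3$ also violates it, since $s=2$ gives $\gcd(21,35)=7$. So stating $r$ prime (or $\gcd(t-s,r)=1$), together with $n\ge 1$ and offset-indexed signs, is the right repair.
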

\begin{proof}
    For expression 
    \begin{equation}
        \sum_{s \in S_{n,\tau}} \sum_{l=0}^{r-1} (-1)^s g((t-s)l) |\psi_l\rangle,
    \end{equation}
    if there is $s \in S_{n,\tau}$ such that $s=t \pmod{r}$, the expression can be written as 
    \begin{equation}
    \begin{split}
        &\sum_{s \in S_{n,\tau}} \sum_{l=0}^{r-1} (-1)^s g((t-s)l) |\psi_l\rangle\\
        =& \sum_{s \in S_{n,\tau}, s \neq t} \sum_{l=0}^{r-1} (-1)^s g((t-s)l) |\psi_l\rangle + \sum_{l=0}^{r-1} (-1)^t |\psi_l\rangle\\
        =& \sum_{s \in S_{n,\tau}, s \neq t} (-1)^s |\psi_0\rangle + \sum_{l=0}^{r-1} (-1)^t |\psi_l\rangle\\
        =& \frac{(-1)^{t+1}}{\sqrt{r}} \sum_{k=1}^{r-1} |a^k\rangle + \left(\frac{(-1)^{t+1}}{\sqrt{r}}+(-1)^t \sqrt{r}\right) |0^{m-1}1\rangle.
    \end{split}
    \end{equation}
    And if $t \notin S_{n,\tau}$, we have 
    \begin{equation}
    \begin{split}
        &\sum_{s \in S_{n,\tau}} \sum_{l=0}^{r-1} (-1)^s g((t-s)l) |\psi_l\rangle\\
        =& \sum_{s \in S_{n,\tau}} \sum_{l=1}^{r-1} (-1)^s g((t-s)l) |\psi_l\rangle + \sum_{s \in S_{n,\tau}} (-1)^s g((t-s)l) |\psi_l\rangle\\
        =& \sum_{s \in S_{n,\tau}} (-1)^s |\psi_l\rangle = 0.
    \end{split}
    \end{equation}
\end{proof}
It can be recognized that if $t \in S_{n,\tau}$, the state $|0^{m-1}1\rangle$ has a significant proportion. Therefore, with the Lemma \ref{MAIN}, we can determine whether $t \in S_{n,\tau}$ according to whether we can observe $1$ from the Superposition state eigenvectors with high probability. 

\subsection{Details of our algorithm}
We have introduced how to distinguish whether $t \in S_{n,\tau}$ according to the value of the superposition state eigenvectors. In this part, we exhibit the details of our algorithm. 

Firstly, we initialize the quantum state as
\begin{equation}
    |\phi_0\rangle = |0\rangle^{\otimes n}|0\rangle^{\otimes m}|0^{m-1}1\rangle|0\rangle,
\end{equation}
where the first and second registers contain $n$ and $m$ qubits separately. The third register also includes $m$ qubits, the initial state $|0^{m-1}1\rangle$ represents the $m$-bit decimal $1$ whose every qubit is initialized as $|0\rangle$ except the lowest qubit as $|1\rangle$. The last register contains $1$ ancilla qubit, which is applied to mark whether the eigenvectors are required. 

Then, we apply Hadamard gates to each of the qubits in the first and second registers:
\begin{equation}
    |\phi_1\rangle= \frac{1}{\sqrt{2^m 2^n}}\sum_{s=0}^{2^{n}-1} |s\rangle \sum_{x=0}^{2^m-1} |x\rangle |0^{m-1}1\rangle |0\rangle.
\end{equation}
Then, after applying $\Gamma_{\tau}(M_{a})$ and $\Lambda(M_{b})$ according to the first and second registers, then the final state becomes
\begin{equation}
\begin{split}
    |\phi_3\rangle=& \Lambda^{2:3}(M_{b})|\phi_2\rangle= \Lambda^{2:3}(M_{b})\Gamma^{1,2:3}_{\tau}(M_{a})|\phi_1\rangle\\
    =&\Lambda^{2:3}(M_{b})\Gamma^{1,2:3}_{\tau}(M_{a})\frac{1}{\sqrt{2^m 2^n}}\sum_{s=0}^{2^{n}-1} |s\rangle \sum_{x=0}^{2^m-1} |x\rangle |0^{m-1}1\rangle |0\rangle\\
    =&\frac{1}{\sqrt{ 2^{n+m}r}}\sum_{s=0}^{2^{n}-1} |s\rangle \sum_{l=0}^{r-1} \sum_{x=0}^{2^m-1} e^{2 \pi i (t-\tau-s) l x/ r} |x\rangle |\psi_l\rangle |0\rangle.
\end{split}
\end{equation}
where $\tau \in \{0,1,...,r-1\}$ is a manually chosen integer which is determine the set $S_{n,\tau}$ and $\Gamma^{1,2:3}_{\tau}(M_{a})$ represents that the gate is controlled with the register 1 and 2 and applied on the register 3. In the next step, before we analyze the error in the next section, we assume that we can exactly obtain $(t-\tau-s)l \pmod{r}$ via the $QFT_{2^m}^\dagger$. Thus, the $QFT_{2^m}^\dagger$ leads to 
\begin{equation}
    |\phi_4\rangle= (I^{\otimes n}\otimes QFT_{2^m}^\dagger \otimes I^{\otimes m} \otimes I)|\phi_3\rangle=
    \frac{1}{\sqrt{2^{n}r}}\sum_{s=0}^{2^{n}-1} |s\rangle \sum_{l=0}^{r-1} |(t-\tau-s)l \pmod{r} \rangle |\psi_l\rangle |0\rangle.
\end{equation}
Then, $U_{g}$ acts on the second and fourth registers to determine whether the eigenvector should be kept:
\begin{equation}
    |\phi_5\rangle= \frac{1}{\sqrt{2^{n}r}}\sum_{s=0}^{2^{n}-1} |s\rangle \sum_{l=0}^{r-1} |(t-\tau-s)l\rangle |\psi_l\rangle (\sqrt{1-g^2((t-\tau-s)l)}|0\rangle+g((t-\tau-s)l)|1\rangle),
\end{equation}
where we omitted $\bmod$ for short. In the following steps, we aim to release the dependency relationship between the third and the first two registers by uncomputing the above steps. Therefore, we apply the $QFT_{2^m}$ to the second register and obtain
\begin{equation}
    |\phi_6\rangle=\frac{1}{\sqrt{2^{n+m}r}}\sum^{r-1}_{l=0} \sum_{s=0}^{2^{n}-1} \sum^{2^m-1}_{x=0} |s\rangle |x\rangle e^{2 \pi i (t-\tau-s) lx/r} |\psi_l\rangle (\sqrt{1-g^2((t-\tau-s)l)}|0\rangle+g((t-\tau-s)l)|1\rangle).
\end{equation}
Similarly, we also apply the inverse of $\Gamma_{\tau}(M_{a})$ and $\Lambda(M_{b})$:
\begin{equation}
    |\phi_8\rangle=\frac{1}{\sqrt{2^{n+m}r}}\sum^{r-1}_{l=0} \sum_{s=0}^{2^{n}-1} \sum^{2^m-1}_{x=0} |s\rangle |x\rangle |\psi_l\rangle(\sqrt{1-g^2((t-\tau-s)l)}|0\rangle+g((t-\tau-s)l)|1\rangle).
\end{equation}
To cancel the dependency of the first register, Hadamard gates are applied on each qubit of the first and second registers, and the quantum state becomes 
\begin{equation} \label{quantum_state_9}
\begin{split}
    |\phi_9\rangle =&\frac{1}{\sqrt{2^{2n}r}}\sum^{r-1}_{l=0} \sum_{s=0}^{2^{n}-1} \sum_{s'=0}^{2^{n}-1} (-1)^{s\cdot s'} |s'\rangle |0\rangle |\psi_l\rangle(\sqrt{1-g^2((t-\tau-s)l)}|0\rangle+g((t-\tau-s)l)|1\rangle)\\
    =& \frac{1}{\sqrt{2^{2n}r}} \sum_{s'=0}^{2^{n}-1} |s'\rangle |0\rangle \sum_{s=0}^{2^{n}-1} \sum^{r-1}_{l=0} (-1)^{s\cdot s'}  |\psi_l\rangle(\sqrt{1-g^2((t-\tau-s)l)}|0\rangle+g((t-\tau-s)l)|1\rangle),
\end{split}
\end{equation}
where $\cdot$ represents the inner product of two binary vectors. Knowing that the value of $g((t-\tau-s)l)$ depends on the value of $(t-\tau-s)l \pmod{r}$, if and only if $t=\tau+s \pmod{r}$ or $l=0 \pmod r$, $g((t-\tau-s)l)=0$. And the total process is shown in Fig. \ref{Total_process}. Thus, we discuss both cases of whether there is an $s$ such that $t=\tau+s \pmod{r}$.

\begin{figure}
    \centering
    \includegraphics[scale=0.7]{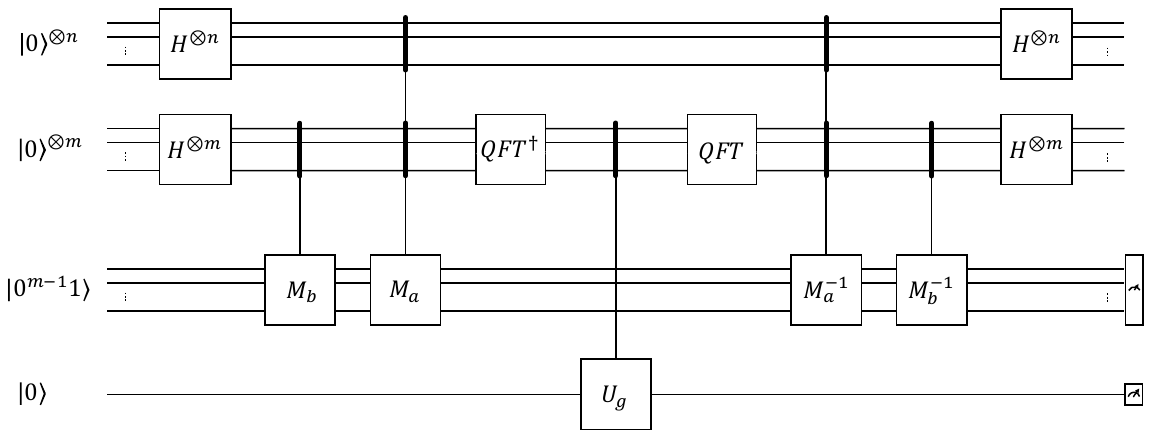}
    \caption{The total process of our Algorithm \ref{DQDLA}.}
    \label{Total_process}
\end{figure}

\subsection{Probability of observing $0^{m-1}1$ from the third register}
Considering that $|\phi_9\rangle$ of the last register is $|1\rangle$, the unnormalized quantum state is
\begin{equation} \label{before_s}
\begin{split}
    \frac{1}{\sqrt{2^{2n}r}} \sum_{s'=0}^{2^{n}-1} |s'\rangle |0\rangle \left(\sum_{s=0}^{2^{n}-1} \sum^{r-1}_{l=0} (-1)^{s\cdot s'} g((t-\tau-s)l) |\psi_l\rangle\right)|1\rangle.
\end{split}
\end{equation}

If there is $s$ such that $t=\tau+s \pmod{r}$ and $s'=0$, then following Lemma \ref{MAIN}, we have
\begin{equation} \label{state_in_1_0_4_1}
\begin{split}
    \frac{1}{2^{n}r} |0\rangle |0\rangle &\left ( \sum^{r-1}_{k=1} (2^n-1) |a^k\rangle 
    + \left( r + (2^n-1) \right) |0^{m-1}1\rangle \right ) |1\rangle .
\end{split}
\end{equation}

Otherwise, if $s' \neq 0$, also following Lemma \ref{MAIN}, the state is
\begin{equation} \label{state_in_1_1_4_1}
\begin{split}
    \frac{1}{2^{n}r}  \sum_{s'=1}^{2^{n}-1} |s'\rangle |0\rangle &\left ( \sum^{r-1}_{k=1} (-1)^{(t-\tau)\cdot s'+1} |a^k\rangle + (-1)^{(t-\tau)\cdot s'}(r-1) |0^{m-1}1\rangle \right ) |1\rangle .
\end{split}
\end{equation}

According to this quantum state, we have:
\begin{lemma}
    For quantum state $|\phi_9\rangle$, if $t\in S_{n,\tau} = \{\tau+s \pmod{r}| s= 0,1,2,...,2^n-1 \}$, the probability of the fourth register is 1 is
    \begin{equation}
        \frac{(2^n-1)^2(r-1)+(r+2^n-1)^2+(2^n-1)(r-1)+(2^n-1)(r-1)^2}{2^{2n}r^2}.
    \end{equation}
\end{lemma}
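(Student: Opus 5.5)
The probability is the squared norm of the unnormalized state in Eq.~\eqref{before_s}, i.e., the component of $|\phi_9\rangle$ whose fourth register is $|1\rangle$. I will compute it directly from the two explicit branches the paper has already derived. The first register states $|s'\rangle$ are mutually orthogonal, and the second register is fixed to $|0\rangle$. So the squared norm splits as a sum over $s'$: the $s'=0$ branch, Eq.~\eqref{state_in_1_0_4_1}, plus the $2^n-1$ branches with $s'\neq 0$, Eq.~\eqref{state_in_1_1_4_1}. Within each branch, the third-register vectors $|a^k\rangle$ for $k=1,\dots,r-1$ and $|0^{m-1}1\rangle=|a^0\rangle$ are distinct computational basis states, since $a^0,\dots,a^{r-1}$ are distinct modulo $N$ by minimality of $r$. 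Each branch's squared norm is therefore the sum of squared moduli of its coefficients.

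\textbf{Step 1 ($s'=0$).} The coefficient of each $|a^k\rangle$ ($k\ge 1$) is $(2^n-1)/(2^n r)$, and the coefficient of $|0^{m-1}1\rangle$ is $(r+2^n-1)/(2^n r)$. This contributes
\[
\frac{(2^n-1)^2(r-1)+(r+2^n-1)^2}{2^{2n}r^2}.
\]

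\textbf{Step 2 ($s'\neq 0$).} The signs $(-1)^{(t-\tau)\cdot s'}$ have modulus one. Each such branch therefore has $r-1$ coefficients of modulus $1/(2^n r)$ and one coefficient of modulus $(r-1)/(2^n r)$, giving $\bigl((r-1)+(r-1)^2\bigr)/(2^{2n}r^2)$. Multiplying by the $2^n-1$ values of $s'$ gives
\[
\frac{(2^n-1)(r-1)+(2^n-1)(r-1)^2}{2^{2n}r^2}.
\]

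\textbf{Step 3.} Adding the two contributions yields exactly the stated expression.

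The main point needing care is to justify the branch coefficients used in Steps 1 and 2, rather than the summation itself. They follow from Lemma~\ref{MAIN} after substituting $|\psi_l\rangle=\frac{1}{\sqrt r}\sum_k\omega_r^{-lk}|a^k\rangle$ and the prefactor $1/\sqrt{2^{2n}r}$. For $s'=0$ the Hadamard signs are trivial, so only the $l=0$ and $s=t-\tau$ terms survive, and one collects $r$ copies of $|a^0\rangle$ from $\sum_l|\psi_l\rangle=\sqrt r\,|0^{m-1}1\rangle$. For $s'\neq 0$ the signs $(-1)^{s\cdot s'}$ over $s$ make the $|\psi_0\rangle$ contributions cancel, except for the distinguished term $s=t-\tau$.

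I will take Eqs.~\eqref{state_in_1_0_4_1} and \eqref{state_in_1_1_4_1} as given, since they are stated before the lemma. The lemma then reduces to the orthogonality bookkeeping above.
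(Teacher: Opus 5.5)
Your proposal is correct and is essentially the paper's own proof. Both arguments take the fourth-register-$|1\rangle$ branch states \eqref{state_in_1_0_4_1} and \eqref{state_in_1_1_4_1} as given, use orthogonality of the $|s'\rangle$ and of the $|a^k\rangle$ to sum squared coefficient moduli, and add the $s'=0$ contribution to $2^n-1$ copies of the $s'\neq 0$ contribution. Your coefficient $(r+2^n-1)^2$ is the right one; the paper's intermediate line has a sign typo, and the total simplifies to $\frac{r+2^n-1}{2^n r}$.

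One caveat applies to both arguments. Your remark that ``only the $l=0$ and $s=t-\tau$ terms survive'' (i.e.\ Lemma~\ref{MAIN}) silently assumes $\gcd(t-\tau-s,r)=1$ for every $s\neq t-\tau$, which holds for example when $r$ is prime. Otherwise $(t-\tau-s)l\equiv 0 \pmod r$ has $\gcd(t-\tau-s,r)>1$ solutions $l$, extra pairs $(s,l)$ get marked, and both the branch states and the stated probability change.
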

\begin{proof}
    The probability of the first register is 0 and the fourth register is 1, which can be represented as
    \begin{equation}
    \begin{split}
    &\left| \frac{1}{2^{n}r} \langle 0| \langle 0| \left ( \sum^{r-1}_{k=1} (2^n-1) \langle a^k| \langle 1| + \left( r + (2^n-1) \right) \langle 1| \langle 1| \right ) \right.\\
    &\left. \frac{1}{2^{n}r} |0\rangle |0\rangle \left ( \sum^{r-1}_{k=1} (2^n-1) |a^k\rangle |1\rangle + \left( r + (2^n-1) \right) |0^{m-1}1\rangle |1\rangle \right ) \right |\\
    =& \frac{1}{2^{2n}r^2} \left (\sum^{r-1}_{k=1} (2^n-1)^2 + (r-(2^n-1))^2 \right ) 
    = \frac{(r-1)(2^n-1)^2 + (r-2^n-1)^2}{2^{2n}r^2}.
    \end{split}
    \end{equation}
    Then the probability of the first register is NOT 0 and the fourth register is 1
    \begin{equation}
    \begin{split}
        &\left | \frac{1}{2^{n}r}  \sum_{s'=1}^{2^{n}-1} \langle s'| \langle 0| \left ( \sum^{r-1}_{k=1} (-1)^{(t-\tau)\cdot s'+1} \langle a^k| \langle 1| + (-1)^{(t-\tau)\cdot s'}(r-1) \langle 1| \langle 1| \right ) \right .
        \\ & \left. \frac{1}{2^{n}r}  \sum_{s'=1}^{2^{n}-1} |s'\rangle |0\rangle \left ( \sum^{r-1}_{k=1} (-1)^{(t-\tau)\cdot s'+1} |a^k\rangle |1\rangle + (-1)^{(t-\tau)\cdot s'}(r-1) |0^{m-1}1\rangle |1\rangle \right ) \right | \\
        = &\frac{1}{2^{2n}r^2} \sum_{s'=1}^{2^{n}-1} \left (\sum^{r-1}_{k=1} 1 + (r-1)^2 \right )
        = \frac{(2^n-1)(r-1)+(2^n-1)(r-1)^2}{2^{2n} r^2}.
    \end{split}
    \end{equation}
    Therefore, the probability of the fourth register is 1 equals 
    \begin{equation}
    \begin{split}
        &\frac{(r-1)(2^n-1)^2 + (r-2^n-1)^2}{2^{2n}r^2} + \frac{(2^n-1)(r-1)+(2^n-1)(r-1)^2}{2^{2n} r^2}\\
        =&\frac{(2^n-1)^2(r-1)+(r+2^n-1)^2+(2^n-1)(r-1)+(2^n-1)(r-1)^2}{2^{2n}r^2}
    \end{split}
    \end{equation}
\end{proof}

And, the probability of both the fourth and third registers are 1 follows
\begin{lemma}
    For quantum state $|\phi_9\rangle$, if $t\in S_{n,\tau} = \{\tau+s \pmod{r}| s= 0,1,2,...,2^n-1 \}$, the probability of the fourth register is $1$ and the third register are $0^{m-1}1$ is
    \begin{equation}
        \frac{(r+2^n-1)^2+(2^n-1)(r-1)^2}{2^{2n}r^2}.
    \end{equation}
\end{lemma}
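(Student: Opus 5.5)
We will prove this exactly as the preceding lemma was proved, restricting the squared norm to the component in which the third register is $|0^{m-1}1\rangle$ and the fourth is $|1\rangle$. We start from the unnormalized branch of $|\phi_9\rangle$ in Eq.~\eqref{before_s}. It splits according to the value of the first register: $s'=0$ gives Eq.~\eqref{state_in_1_0_4_1}, and $s'\neq 0$ gives Eq.~\eqref{state_in_1_1_4_1}. Both expressions come from Lemma \ref{MAIN}. The key structural fact is that they are already written in the computational basis $\{|a^k\rangle\}$ of the third register, with $|a^0\rangle=|0^{m-1}1\rangle$. The states $|a^k\rangle$ for $0\le k<r$ are distinct basis states, because $a$ has order $r$. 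Hence projecting onto $|0^{m-1}1\rangle$ simply reads off the coefficient of $|0^{m-1}1\rangle$ in each branch.

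Next we apply the projector $\Pi=I^{\otimes n}\otimes I^{\otimes m}\otimes|0^{m-1}1\rangle\langle 0^{m-1}1|\otimes|1\rangle\langle1|$. Since the first register takes orthogonal values $|s'\rangle$, the probability is a sum of squared moduli over $s'$.
\begin{itemize}
\item For $s'=0$, the coefficient is $(r+2^n-1)/(2^n r)$, which contributes $(r+2^n-1)^2/(2^{2n}r^2)$.
\item For each of the $2^n-1$ values $s'\neq 0$, the coefficient is $(-1)^{(t-\tau)\cdot s'}(r-1)/(2^n r)$. The sign disappears under the modulus, so each contributes $(r-1)^2/(2^{2n}r^2)$, and together they contribute $(2^n-1)(r-1)^2/(2^{2n}r^2)$.
\end{itemize}
Adding the two contributions gives the claimed expression. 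The terms $(2^n-1)^2(r-1)$ and $(2^n-1)(r-1)$ from the previous lemma are exactly the $|a^k\rangle$, $k\ge1$, contributions that $\Pi$ discards. This is a useful consistency check.

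The only delicate point is justifying the coefficient in the $s'=0$ branch. Lemma \ref{MAIN} gives, for $|0^{m-1}1\rangle$, the coefficient $\frac{(-1)^{t+1}}{\sqrt r}+(-1)^t\sqrt r$ with alternating signs $(-1)^s$. In contrast, at $s'=0$ the phase $(-1)^{s\cdot s'}$ is identically $1$, so the signs do not alternate there. I would therefore redo that step directly rather than cite the lemma verbatim.

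The direct computation runs as follows.
\begin{itemize}
\item The term $s=t-\tau$ contributes $\sum_l|\psi_l\rangle=\sqrt r\,|0^{m-1}1\rangle$.
\item Each of the other $2^n-1$ values of $s$ contributes only the $l$-terms where $g((t-\tau-s)l)=1$. With $2^n<r$, this is $l=0$ alone, giving $|\psi_0\rangle=\frac1{\sqrt r}\sum_k|a^k\rangle$.
\end{itemize}
After the prefactor $1/\sqrt{2^{2n}r}$ is applied, the coefficient of $|0^{m-1}1\rangle$ is $(r+2^n-1)/(2^n r)$, matching Eq.~\eqref{state_in_1_0_4_1}.

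Two loose ends in the paper need attention. First, the earlier lemma's proof writes $(r-2^n-1)^2$, which appears to be a typographical slip for $(r+2^n-1)^2$. Second, the assumption $2^n<r$ is what ensures no additional $l$ with $(t-\tau-s)l\equiv0 \pmod r$ appears, at least when $r$ is prime. I would state that assumption explicitly, as the paper implicitly does in Lemma \ref{MAIN}.
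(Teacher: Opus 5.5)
Your proof is correct and is essentially the paper's own argument. You project onto $|0^{m-1}1\rangle$ in the third register and $|1\rangle$ in the fourth, read off the coefficients $(r+2^n-1)/(2^n r)$ at $s'=0$ and $\pm(r-1)/(2^n r)$ at each $s'\neq 0$, and sum the squares. Recomputing the $s'=0$ coefficient directly was the right call, since Lemma~\ref{MAIN}'s alternating signs do not apply there.

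Your caveat is real, but state it precisely. $2^n<r$ only gives $t-\tau-s\not\equiv 0 \pmod r$ for $s\neq t-\tau$, whereas ``only $l=0$'' needs $\gcd(t-\tau-s,r)=1$. For example, $r=35$, $n=3$, $\tau=t$, $s=5$ admits $l\in\{0,7,14,21,28\}$ and changes the $|0^{m-1}1\rangle$ coefficient. The paper's proof relies on this coprimality hypothesis just as silently.
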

\begin{proof}
    The probability of all the fourth, third, and the registers are 1 equals
    \begin{equation}
    \begin{split}
        &\left| \left(\frac{1}{2^{n}r} \langle 0| \langle 0|  \left( r + (2^n-1) \right) \langle 0^{m-1}1| \langle 1| \right) \left(\frac{1}{2^{n}r} |0\rangle |0\rangle \left( r + (2^n-1) \right) |0^{m-1}1\rangle |1\rangle \right)\right |\\
        =&\frac{(r+2^n-1)^2}{2^{2n}r^2},
    \end{split}
    \end{equation}
    and probability of (1) the fourth and third registers are 1, and (2) the first register is NOT 1 equals
    \begin{equation}
    \begin{split}
        &\left | \left(\frac{1}{2^{n}r}  \sum_{s'=1}^{2^{n}-1} \langle s'| \langle 0| (r-1) \langle 0^{m-1}1| \langle 1|\right)\left( \frac{1}{2^{n}r}  \sum_{s'=1}^{2^{n}-1} |s'\rangle |0\rangle (r-1) |0^{m-1}1\rangle |1\rangle \right) \right | \\
        =& \frac{(2^n-1)(r-1)^2}{2^{2n}r^2}.
    \end{split}
    \end{equation}
    Then the probability can be obtained by adding both above-mentioned results.
\end{proof}

Combining the upper results, we utilize a proposition to describe the probability of observing $1$ from the third register 
\begin{proposition} \label{t_in}
    For quantum state $|\phi_9\rangle$, if $t\in S_{n,\tau} = \{\tau+s \pmod{r}| s= 0,1,2,...,2^n-1 \}$ for integer $\tau$, the probability of the fourth register is $1$ is no less than $\frac{1}{2^n}$. If the fourth register is $1$, then the probability of observing $0^{m-1}1$ from the third register is
    \begin{equation}
        \frac{r^2+(2^n-1)^2+(2^n)(r^2+1)}{(2^n-1)^2r+r^2+(2^n-1)r(r+1)}>\frac{1}{2+\frac{2}{r}}.
    \end{equation}
\end{proposition}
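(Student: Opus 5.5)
The plan is to combine the two preceding lemmas directly. Write $P_1$ for the probability, computed in the first lemma, that the fourth register of $|\phi_9\rangle$ is $1$:
\begin{equation}
P_1=\frac{(2^n-1)^2(r-1)+(r+2^n-1)^2+(2^n-1)(r-1)+(2^n-1)(r-1)^2}{2^{2n}r^2}.
\end{equation}
Write $P_{13}$ for the joint probability from the second lemma that the fourth register is $1$ and the third register is $0^{m-1}1$:
\begin{equation}
P_{13}=\frac{(r+2^n-1)^2+(2^n-1)(r-1)^2}{2^{2n}r^2}.
\end{equation}

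For the first claim, $P_1\ge \frac{1}{2^n}$, I will drop nonnegative terms from $P_1$. The two largest contributions are $(r+2^n-1)^2\ge r^2$ and $(2^n-1)(r-1)^2$. Together with $(2^n-1)(r-1)$ and $(2^n-1)^2(r-1)$ they sum to at least roughly $2^n r^2$, which already yields $P_1\ge 2^{-n}$. Concretely, I will expand the numerator as a polynomial in $r$ with leading coefficient $1+(2^n-1)=2^n$. I will then check that the lower-order terms are nonnegative for $r\ge 2$. The linear term in $r$ collects $(2^n-1)^2+2(2^n-1)+(2^n-1)-2(2^n-1)$, which is nonnegative. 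So the numerator is at least $2^n r^2$, and dividing by $2^{2n}r^2$ gives the bound.

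For the conditional probability I will use $P_{13}/P_1$, simplifying the common denominator $2^{2n}r^2$ away. I then need to rewrite the numerator and denominator in the form stated in the proposition. This is a routine expansion, but the displayed form in the statement may not match my expansion exactly, because the first lemma's proof contains a sign slip, $(r-2^n-1)$ versus $(r+2^n-1)$. I would follow the stated form of the lemmas ($+$ signs) and reconcile the algebra term by term. This reconciliation is the step I expect to be the main obstacle.

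The final inequality $P_{13}/P_1>\frac{1}{2+2/r}$ is equivalent to $(2+\tfrac{2}{r})P_{13}>P_1$, that is, $2(r+1)\,\mathrm{num}(P_{13})>r\,\mathrm{num}(P_1)$. The dominant terms on both sides are of order $2^n r^2$ versus $2^n r^3$ after scaling. My approach is to bound $\mathrm{num}(P_1)$ from above by comparison with $\mathrm{num}(P_{13})$. The extra terms in $P_1$ are $(2^n-1)^2(r-1)+(2^n-1)(r-1)$, and I will show these are at most $\mathrm{num}(P_{13})(1+\tfrac{2}{r})$. For this I will use $2^n<r$, which gives $(2^n-1)^2(r-1)<(2^n-1)(r-1)^2$. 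The term $(2^n-1)(r-1)$ is then absorbed by the slack from the factor $2/r$. The hypothesis $2^n<r$ from the statement is essential here, and this comparison is where the constant $2+\frac{2}{r}$ arises.
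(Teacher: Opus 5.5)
Your proposal is correct, and its skeleton is the paper's: form $P_{13}/P_1$ from the two preceding lemmas and bound it using $2^n<r$. It differs in two useful ways.

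First, the paper's proof never addresses the claim $P_1\ge 2^{-n}$. Your expansion supplies it, and the bound is in fact explicit: the numerator of $P_1$ is $2^n r^2+2^n(2^n-1)r=2^n r(r+2^n-1)$, so $P_1=\frac{r+2^n-1}{2^n r}$.

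Second, the final inequality is handled differently. The paper enlarges the simplified denominator in two steps (via $r^2<r(r+1)$ and $(2^n-1)^2r<(2^n-1)r(r+1)$) and shrinks the numerator to $2^n r^2$, arriving at $\frac{2^n r^2}{(2^{n+1}-1)r(r+1)}$. You instead compare the excess $E=(2^n-1)^2(r-1)+(2^n-1)(r-1)$ with $(1+\frac{2}{r})\,\mathrm{num}(P_{13})$. This uses $2^n<r$ in the same way and is equally valid. It does not even need the $2/r$ slack, since $(r+2^n-1)^2$ alone exceeds $(2^n-1)(r-1)$.

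You have, however, misplaced the discrepancy you anticipate. The sign slip $(r-2^n-1)$ occurs only in an intermediate line of the first lemma's proof. The lemma's stated value, which you use, has $(r+2^n-1)^2$, and that is what the $s'=0$ branch of the state actually gives.

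With the $+$ signs, the two numerators are:
\[
\mathrm{num}(P_{13})=(r+2^n-1)^2+(2^n-1)(r-1)^2=r^2+(2^n-1)^2+(2^n-1)(r^2+1),
\]
while $\mathrm{num}(P_1)$ equals the displayed denominator exactly. So no term-by-term reconciliation will reproduce the displayed numerator. The coefficient $2^n$ in front of $(r^2+1)$ in the statement is a misprint for $2^n-1$. The paper's own proof writes $2^n-1$ there, and with $2^n$ the expression is not even a probability: for $n=1$ it tends to $3/2$.

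Because your inequality argument runs on $P_{13}/P_1$ itself, it is unaffected by this. The cleanest way to finish is to factor: $\mathrm{num}(P_{13})=2^n(r^2+2^n-1)$ and $\mathrm{num}(P_1)=2^n r(r+2^n-1)$. The conditional probability is then $\frac{r^2+2^n-1}{r(r+2^n-1)}$, which exceeds $\frac{1}{2}>\frac{1}{2+2/r}$ because $2^n-1<r$.
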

\begin{proof}
According to the above-mentioned lemmas, it is clear that the probability equals the probability of the third register is $0^{m-1}1$ and the fourth register is $1$ divided by the probability of the fourth register is $1$, which is
    \begin{equation}
    \begin{split}
        &\frac{(r+2^n-1)^2+(2^n-1)(r-1)^2}{(2^n-1)^2(r-1)+(r+2^n-1)+(2^n-1)(r-1)+(2^n-1)(r-1)^2}\\
        =& \frac{r^2+(2^n-1)^2+(2^n-1)(r^2+1)}{(2^n-1)^2r+r^2+(2^n-1)r(r+1)}
        >\frac{r^2+(2^n-1)^2+(2^n-1)(r^2+1)}{(2^n-1)^2r+2^nr(r+1)}.
    \end{split}
    \end{equation}
It's proper to set $2^n-1<r-1$ because the element number of the set would not be larger than $r$. Therefore, the above expression is greater than
    \begin{equation}
    \begin{split}
        &\frac{r^2+(2^n-1)^2+(2^n-1)(r^2+1)}{(2^n-1)r(r+1)+2^nr(r+1)}\\
        >&\frac{2^n r^2}{(2^{n+1}-1)r^2+(2^{n+1}-1)r}
        >\frac{1}{2+\frac{2}{r}}.
    \end{split}
    \end{equation}
\end{proof}

Then, considering the case of $t \notin S_{n,\tau}$, the quantum state $|\phi_9\rangle$ with $1$ as the last quantum register follows
\begin{proposition} \label{t_notin}
    For quantum state $|\phi_9\rangle$, if $t\notin S_{n,\tau} = \{\tau+s \pmod{r}| s= 0,1,2,...,2^n-1 \}$ for integer $\tau$, the probability of the fourth register is $1$ is no less than $1/r$. And if the fourth register is $1$, then the probability of observing $a^k \pmod{N}$ from the third register is $\frac{1}{r}$. Especially, the probability of observing $0^{m-1}1$ is $\frac{1}{r}$. 
\end{proposition}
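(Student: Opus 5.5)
The plan is to work directly with the unnormalized $|1\rangle$-branch of $|\phi_9\rangle$ in Eq.~\eqref{before_s}, as was done for Proposition~\ref{t_in}, now in the case where no $s\in\{0,\dots,2^n-1\}$ satisfies $t\equiv\tau+s \pmod r$. For each $s$ the shifted difference $t-\tau-s$ is then a nonzero residue mod $r$, so $g((t-\tau-s)l)=1$ only for $l=0$ (taking $r$ prime, or more generally counting only the $l$ with $(t-\tau-s)l\equiv 0$ and absorbing any extra zero-divisor terms into the lower bound below). The inner sum over $l$ therefore reduces, for each fixed $s'$, to $\sum_{s}(-1)^{s\cdot s'}|\psi_0\rangle$, and $|\psi_0\rangle=\frac{1}{\sqrt r}\sum_{k=0}^{r-1}|a^k\rangle$. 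This is the second case of Lemma~\ref{MAIN}, but taken with the Walsh sign $(-1)^{s\cdot s'}$ in place of $(-1)^s$.

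Next I sum over $s$. For $s'=0$ the coefficient is $2^n$. For $s'\neq 0$ the coefficient is $\sum_{s=0}^{2^n-1}(-1)^{s\cdot s'}=0$, because the characters of $\mathbb{Z}_2^n$ are orthogonal. The $|1\rangle$-branch thus collapses to
\[
\frac{2^n}{\sqrt{2^{2n}r}}\,|0\rangle|0\rangle|\psi_0\rangle|1\rangle=\frac{1}{\sqrt r}\,|0\rangle|0\rangle|\psi_0\rangle|1\rangle ,
\]
and its squared norm gives probability exactly $1/r$ that the fourth register is $1$. If further $l$ with $(t-\tau-s)l\equiv0$ survive (non-prime $r$), they add orthogonal eigenvector components, so the probability can only increase. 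That is why the statement says ``no less than $1/r$''.

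Conditioned on the fourth register being $1$, the third register is in the state $|\psi_0\rangle$, which is the uniform superposition over $\{|a^k\rangle\}_{k=0}^{r-1}$. These basis states are distinct because $r$ is the order of $a$. Measuring therefore yields each $a^k \bmod N$ with probability $1/r$, and in particular yields $a^0=1$, i.e.\ $0^{m-1}1$, with probability $1/r$.

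The main obstacle is the exact-vanishing step. It depends on $g((t-\tau-s)l)=0$ for all $l\neq0$, and this fails when $\gcd(t-\tau-s,r)>1$. For general $r$, the conditional distribution then mixes several $|\psi_l\rangle$ and is no longer exactly uniform. I would first argue under the assumption that $r$ is prime, which is the standard DLP setting and the one the paper implicitly uses in Lemma~\ref{MAIN}. I would then note that the extra components only strengthen the lower bound on the marking probability.
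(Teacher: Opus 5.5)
Your proposal is correct and follows the paper's proof: isolate the $l=0$ term, which is the only marked one when $t\notin S_{n,\tau}$; collapse the Walsh sum $\sum_{s}(-1)^{s\cdot s'}$ to $s'=0$; and read off the marked branch $\frac{1}{r}|0\rangle|0\rangle\sum_{k=0}^{r-1}|a^k\rangle|1\rangle$, which gives probability $1/r$ and a uniform conditional distribution over the $a^k$. Your caveat is a genuine hypothesis that the paper uses silently: the argument needs $\gcd(t-\tau-s,r)=1$ for every $s$, e.g.\ $r$ prime, and this fails in the paper's own example with $r=35$. Your orthogonality argument correctly shows that without this hypothesis only the lower bound $1/r$ on the marking probability survives, not exact uniformity.
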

\begin{proof}
    According to the quantum state \ref{quantum_state_9}, if there is NOT $s$ such that $t=\tau+s \pmod{r}$, we have
    \begin{equation} 
    \begin{split}
        |\phi_9\rangle = \frac{1}{\sqrt{2^{2n}r}} \sum_{s'=0}^{2^{n}-1} |s'\rangle |0\rangle \sum_{s=0}^{2^{n}-1} (-1)^{s\cdot s'} \left (\sum^{r-1}_{l=1} |\psi_l\rangle|0\rangle+|\psi_0\rangle|1\rangle \right ),
    \end{split}
    \end{equation}
    Then we split the quantum state whose third register is $|1\rangle$, then 
    \begin{equation}
    \begin{split}
        |\phi_9\rangle = \frac{1}{r} |0\rangle |0\rangle \left (\sum^{r-1}_{k=1} (-1)|a^k\rangle |0\rangle +(r-1)|0^{m-1}1\rangle |0\rangle + \sum^{r-1}_{k=0} |a^k\rangle |1\rangle \right ),
    \end{split}
    \end{equation}
    where the quantum state with the fourth quantum register is $1$ follows:
    \begin{equation} \label{state_notin_1_0_4_1}
    \begin{split}
        \frac{1}{r} |0\rangle |0\rangle\sum^{r-1}_{k=0} |a^k\rangle |1\rangle.
    \end{split}
    \end{equation}
    Therefore, only considering the case of the fourth register is $1$, it can be recognized that the probability of observing $a^k \mod{r}$ is equal for each $k$. Thus the probability of the fourth register is $1$ equals
    \begin{equation}
    \begin{split}
        &\left |\left(\frac{1}{r} \langle 0| \langle 0|\sum^{r-1}_{k=0} \langle a^k| \langle 1|\right) \left( \frac{1}{r} |0\rangle |0\rangle\sum^{r-1}_{k=0} |a^k\rangle |1\rangle \right) \right |
        = \frac{1}{r^2} \sum^{r-1}_{k=0} 1 = \frac{1}{r}.
    \end{split}
    \end{equation}
    For each $a^k$, the probability equals $1/r^2$. Thus, if we know the fourth register is $1$, the probability of each $a^k$ is $1/r$.
\end{proof}

Thus, as shown in Algorithm \ref{DQDLA}, we conduct an algorithm that can return whether a $t$ is included in some specific set. Then in the next section, we introduce how to utilize the Algorithm \ref{DQDLA} to find $t$.

\begin{algorithm}
\caption{The Distributed Quantum Discrete Logarithm Algorithm.}\label{DQDLA}
\SetAlgoLined

\KwIn{Initialize Hadamard gate $H$, Controlled gates $\Gamma_{\tau}(M_a)$ and $\Lambda(M_b)$, $QFT_{2^m}$, and $U_{g}$.}
\KwIn{A set $S_{n,\tau}= \{\tau+s|s=0,1,2,...,2^n-1\}$ with start position $\tau$, size of set $n$.}
\KwOut{Whether $t$ in $S_{n,\tau} $.}

Initialize state $|\phi_0 \rangle= |0\rangle^{\otimes n}|0\rangle^{\otimes m}|1\rangle|0\rangle$\label{stp:init}\;
$|\phi_1\rangle = (H^{\otimes n} \otimes H^{\otimes m} \otimes I^{\otimes m+1})|\phi_0\rangle$\;
Apply $\Lambda(M_b)$ to the second and third registers to obtain $|\phi_2\rangle$\;
Apply $\Gamma_{\tau}(M_a)$ to the first, second and third registers to obtain $|\phi_3\rangle$\;
Apply $QFT_{2^m}^\dagger$ to the second register to obtain $|\phi_4\rangle$\;
Apply $U_{g}$ to the second and fourth registers to obtain $|\phi_5\rangle$\;
Apply $QFT_{2^m}$ to the second register to obtain $|\phi_6\rangle$\;
Apply $\Gamma_{\tau}(M^\dagger_a)$ to the second and third register to obtain $|\phi_7\rangle$\;
Apply $\Lambda(M^\dagger_b)$ to the first, second and third registers to obtain $|\phi_8\rangle$\; 
Apply $(H^{\otimes n} \otimes H^{\otimes m} \otimes I^{\otimes m+1})$ to $|\phi_8\rangle$ to obtain $|\phi_9\rangle$\;
If measuring the fourth register and observe $1$, then if measure the third register and observe $0^{m-1}1$ then return True, else return False.
\end{algorithm}

\subsection{Strategy for searching solution} \label{Strategy}
According to the Proposition \ref{t_in} and \ref{t_notin}, the probabilities of observing $1$ from the third register are different depending on whether $t$ is contained in the set $S_{n,\tau} = \{\tau+s \pmod{r}| s= 0,1,2,...,2^n-1 \}$. Therefore, we can determine whether the $t$ is contained in some sets by whether we can observe $1$ from the third register with high probability:

\begin{lemma} \label{Prob_iteration}
    For quantum state $|\phi_9\rangle$, if the fourth register is 1 and we measure the third register for $p$ times, the probability of observing $0^{m-1}1$ for at least $1$ time if $t \in \{\tau+s \pmod{r}| s= 0,1,2,...,2^n-1 \}$ follows
    \begin{equation}
        1-(1-\frac{1}{2+\frac{2}{r}})^p > 1 - \frac{1}{2^p} (\frac{r+2}{r+1})^p.
    \end{equation}
    Else if $t \notin \{\tau+s \pmod{r}| s= 0,1,2,...,2^n-1 \}$, the probability of observing no $1$ follows
    \begin{equation}
        (1-\frac{1}{r})^p.
    \end{equation}
\end{lemma}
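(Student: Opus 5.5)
The plan is to treat the $p$ measurements of the third register as $p$ independent Bernoulli trials. Each trial comes from a fresh run of Algorithm \ref{DQDLA} that is postselected on the fourth register being $1$. The success event in a trial is observing $0^{m-1}1$. Both claimed bounds are then statements about a binomial experiment, so it suffices to bound the single-trial probability in each case and raise the complementary probability to the $p$-th power.

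\emph{Case $t \in S_{n,\tau}$.} By Proposition \ref{t_in}, conditioned on the fourth register being $1$, the probability $q$ of observing $0^{m-1}1$ satisfies $q > \frac{1}{2+\frac{2}{r}}$. Since the trials are independent, the probability of never observing $0^{m-1}1$ in $p$ trials is $(1-q)^p < \bigl(1-\frac{1}{2+\frac{2}{r}}\bigr)^p$. Hence the probability of at least one success exceeds $1-\bigl(1-\frac{1}{2+\frac{2}{r}}\bigr)^p$, which is the first inequality. For the rightmost form, simplify
\begin{equation*}
1-\frac{1}{2+\frac{2}{r}} = 1-\frac{r}{2(r+1)} = \frac{r+2}{2(r+1)},
\end{equation*}
so that
\begin{equation*}
\Bigl(1-\frac{1}{2+\frac{2}{r}}\Bigr)^p = \frac{1}{2^p}\Bigl(\frac{r+2}{r+1}\Bigr)^p .
\end{equation*}
This is exactly the expression in the statement. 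Strictly, it is an equality, so the displayed ``$>$'' is best read as coming from the strict inequality $q > \frac{1}{2+\frac{2}{r}}$ before this substitution.

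\emph{Case $t \notin S_{n,\tau}$.} By Proposition \ref{t_notin}, conditioned on the fourth register being $1$, the third register is uniform over $\{a^k\}_{k=0}^{r-1}$. In particular $0^{m-1}1$ appears with probability exactly $\frac{1}{r}$. By independence, the probability of observing $0^{m-1}1$ in none of the $p$ trials is $(1-\frac{1}{r})^p$.

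\emph{Main obstacle.} The only real difficulty is interpretive: one must justify that the $p$ measurements are independent and identically distributed. Measuring the same copy of $|\phi_9\rangle$ repeatedly would just return the same outcome, so the lemma only makes sense if each of the $p$ samples comes from a fresh execution of Algorithm \ref{DQDLA}. Executions in which the fourth register yields $0$ are discarded and repeated; Propositions \ref{t_in} and \ref{t_notin} guarantee that a $1$ occurs with positive probability. I would state this reading explicitly, and then the rest is the routine binomial computation above.
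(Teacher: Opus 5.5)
Your proposal is correct and takes the same route as the paper, whose proof just invokes Propositions~\ref{t_in} and~\ref{t_notin} and leaves implicit the independent-repetition (binomial) computation that you spell out. Your remark is also accurate: $1-\frac{1}{2+\frac{2}{r}}=\frac{r+2}{2(r+1)}$, so the displayed ``$>$'' between the two bounds is really an identity, and the strictness has to come from $q>\frac{1}{2+\frac{2}{r}}$ in Proposition~\ref{t_in}; this is worth stating explicitly.
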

\begin{proof}
    With Proposition \ref{t_in} and \ref{t_notin}, this lemma can be proven with the probability of each state.
\end{proof}

\begin{algorithm}
\caption{The Strategy for searching $t$.}\label{algorithm_strategy}
\SetAlgoLined

\KwIn{$a,b,r\in \mathbb{Z}^+$ such that $a^r=1 \pmod{N}$, and $m=\lceil \log_2 r \rceil+\log_2 \frac{1}{\epsilon}$.}
\KwOut{$t$ such that $a^t=b \pmod{N}$.}
Initial $n<m-1$, $\tau=0$\;
Find if $t\in S_{n,\tau} = \{\tau+s|s=0,1,2,...,2^n-1\}$ with Algorithm \ref{DQDLA} and obtain $p$ results \label{stp:eval}\;
If Algorithm \ref{DQDLA} return True for at least $1$ time, then set $n\gets n-1$\, else set $\tau \gets \tau + 2^n$\;
if $n \neq 0$, go back to step \ref{stp:eval} \;
\Return $t=\tau$
\end{algorithm}

If we find a list of sets containing $t$, the solution of $t$ can be obtained with the intersection of sets. To determine which sets contain the $t$, we need to measure $0^{m-1}1$ from the third register at least 1 time. If we observe at least 1 time $1$ from a set, the solution may be contained in this set. Otherwise, if we do not observe any $0^{m-1}1$, we can infer that the solution is not contained in the set and we need to move and check the next possible set (See Algorithm \ref{algorithm_strategy}).

To this approach, the number of measurements $p$ should be determined to promise the probability of observing $1$ for at least $0^{m-1}1$ times from the third register can be sufficient discrimination. More specifically, if we conduct a list of sets to find a $t$ (we need $O(\lceil \log_2 r \rceil)$ sets), the total success probability follows 
\begin{theorem} \label{success_prob}
    For Algorithm \ref{algorithm_strategy}, if the the number of iterations is $p$, then the success probability $\mathcal{P}$ follows
    \begin{equation}
        \mathcal{P}> (1 - \frac{1}{2^p} (\frac{r+2}{r+1})^p)^{\lceil \log_2 r \rceil}>\frac{1}{e^{\frac{2p}{d^p-1}}},
    \end{equation}
    where $p+\log_2 p\leq\log_2 r$ and $d=\frac{2(r+1)}{r+2}$.
\end{theorem}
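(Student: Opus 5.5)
The plan is to treat Algorithm \ref{algorithm_strategy} as a chain of at most $\lceil \log_2 r \rceil$ set-membership decisions, each made from $p$ repetitions of Algorithm \ref{DQDLA}. The algorithm succeeds whenever every decision is correct, and I would lower bound that event. I will argue that the decisions most at risk are the ones where $t$ lies in the tested set. In that case Lemma \ref{Prob_iteration} gives a per-decision success probability of at least $1-\frac{1}{2^p}\bigl(\frac{r+2}{r+1}\bigr)^p = 1-d^{-p}$, with $d=\frac{2(r+1)}{r+2}$.

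For the false-positive case ($t\notin S_{n,\tau}$), I would follow the paper's framing: a spurious ``True'' only shrinks the window to a subset that is later rejected, so it can be absorbed. Alternatively I would simply use only the ``in'' bound as the paper's inequality suggests. Since the repetitions in different rounds are independent runs of Algorithm \ref{DQDLA}, the probability that all decisions are correct factors. This gives
\begin{equation}
\mathcal{P} > \bigl(1-d^{-p}\bigr)^{\lceil \log_2 r\rceil},
\end{equation}
which is the first inequality.

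For the second inequality I would write $(1-d^{-p})^{L} = \exp\bigl(L\ln(1-d^{-p})\bigr)$ with $L=\lceil\log_2 r\rceil$. Then I would use $\ln(1-x) \ge -\frac{x}{1-x}$ for $x=d^{-p}\in(0,1)$, which gives $\ln(1-d^{-p}) \ge -\frac{1}{d^p-1}$. Hence $\mathcal{P} > \exp\bigl(-\frac{L}{d^p-1}\bigr)$. It then remains to show $L \le 2p$, or more precisely that $\frac{L}{d^p-1}\le \frac{2p}{d^p-1}$, using the hypothesis $p+\log_2 p\le \log_2 r$.

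This last step is the main obstacle. As stated, the hypothesis bounds $p$ above in terms of $\log_2 r$, not $\log_2 r$ in terms of $p$, so a direct $L\le 2p$ does not follow. I would first try to reinterpret the exponent by noting that $d^p \approx 2^p$ for large $r$. The condition $2^p p \le r$ should then be used to compare $(1+\frac{1}{r+1})^p$ with a constant, giving $d^p \ge 2^p/e^{p/(r+1)}$. After that I would try to redistribute the product over $L$ rounds against a factor $2p$. If this does not close, the fallback is to run the chain with $L$ replaced by the number of genuinely ``in'' decisions, bounded by the shrinking steps. I would track exactly where $p+\log_2 p\le\log_2 r$ enters, expecting it to control $(\frac{r+2}{r+1})^p \le 2$ so that $d^p-1$ stays comparable to $2^p$.
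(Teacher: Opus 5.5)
\textbf{Gap: the final step cannot be closed, because under the stated hypothesis the second inequality is false.} You are right that $p+\log_2 p\le\log_2 r$ bounds $p$ from above. But the obstruction is not the discrepancy between $d^p$ and $2^p$. It is that the exponent $L=\lceil\log_2 r\rceil$ may be arbitrarily larger than $p$. Since $(1-d^{-p})^{L}$ is a fixed number, re-counting decisions in the algorithm does not affect it either. Take $p=1$ and $r=2^{20}$, so that $p+\log_2 p=1\le\log_2 r$. Then $1-d^{-1}=\frac{r}{2(r+1)}<\frac{1}{2}$, hence $(1-d^{-1})^{20}<2^{-20}$. On the other hand $d-1=\frac{r}{r+2}$, so $e^{-2p/(d^p-1)}=e^{-2-4/r}>0.13$.

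The paper's proof gets past this point only through $(1-d^{-p})^{\lceil\log_2 r\rceil}>(1-d^{-p})^{p+\log_2 p}$, and that inequality is reversed. The base lies in $(0,1)$, and the hypothesis makes the left exponent the larger one. (Separately, its inequality $\ln x>\frac{1}{2}(x+\frac{1}{x})$ on $(0,1)$ should read $\ln x>\frac{1}{2}(x-\frac{1}{x})$.) What your argument does establish is $(1-d^{-p})^{L}>e^{-L/(d^p-1)}$, via $\ln(1-y)>-\frac{y}{1-y}$ with $y=d^{-p}$. This gives the claimed bound whenever $\lceil\log_2 r\rceil\le 2p$. That assumption must be added, not derived; it is compatible with the stated one, e.g.\ $r=2^{12}$, $p=8$.

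\textbf{The first inequality also misplaces the hypothesis.} Declaring false positives ``absorbed'', or simply using only the ``in'' bound, is unjustified as you state it. The hypothesis is exactly what licenses using only the ``in'' bound. Since $p+\log_2 p\le\log_2 r$ means $p\,2^p\le r$, Lemma~\ref{Prob_iteration} gives, when $t\notin S_{n,\tau}$, a false-positive probability
$1-(1-\frac{1}{r})^p\le\frac{p}{r}\le 2^{-p}<d^{-p}$.
So each decision, in either case, is correct with probability greater than $1-d^{-p}$ conditionally on the earlier ones. Chaining these conditional probabilities then yields the product. You need conditioning rather than independence here, because the tested sets are chosen adaptively. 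This is the role the paper assigns the hypothesis, through its requirement $1-d^{-p}\le(1-\frac{1}{r})^p$.

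Finally, the exponent $\lceil\log_2 r\rceil$ is not automatic. Algorithm~\ref{algorithm_strategy} scans blocks at its initial level and tests an upper half after rejecting a lower half, so it can make more than $\lceil\log_2 r\rceil$ decisions. It also stops right after accepting a two-element set $S_{1,\tau}$ and returns $\tau$, so ``every decision correct'' does not by itself imply success. Neither you nor the paper accounts for this.
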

\begin{proof}
    With Lemma \ref{Prob_iteration}, the probability of observing $0^{m-1}1$ for at least $1$ time increases with $p$ increase, while the probability of observing no $1$ decreases. Thus, we limit
    \begin{equation}
        1 - \frac{1}{2^p} (\frac{r+2}{r+1})^p \leq (1-\frac{1}{r})^p
    \end{equation}
    to avoid decreasing the final probability. With the upper inequality, we have
    \begin{equation}
    \begin{split}
        &1 - \frac{1}{2^p} (\frac{r+2}{r+1})^p \leq (1-\frac{1}{r})^p\\
        \Rightarrow &\log_2 (1 - \frac{1}{2^p} (\frac{r+2}{r+1})^p) \leq \log_2 (1-\frac{1}{r})^p\\
        \Rightarrow &  - \frac{1}{2^p} (\frac{r+2}{r+1})^p \leq -\frac{p}{r}\\
        \Rightarrow & p \log_2(\frac{1}{2} \frac{r+2}{r+1}) \geq \log_2(p) -\log_2(r)\\
        \Rightarrow & p (1-\log_2(\frac{r+2}{r+1})) \leq \log_2(r) -\log_2(p).
    \end{split}
    \end{equation}
    Knowing that $p (1-\log_2(\frac{r+2}{r+1})) < p$, to further limit the range of $p$, we set
    \begin{equation}
        p \leq \log_2(r) -\log_2(p).
    \end{equation}
    As we have $\lceil \log_2 r \rceil$ sets, the total success probability
    \begin{equation}
        \mathcal{P}> (1 - \frac{1}{2^p} (\frac{r+2}{r+1})^p)^{\lceil \log_2 r \rceil}.
    \end{equation}
    Then, we have
    \begin{equation}
    \begin{split}
        &(1 - \frac{1}{2^p} (\frac{r+2}{r+1})^p)^{\lceil \log_2 r \rceil}
        >(1 - \frac{1}{2^p} (\frac{r+2}{r+1})^p)^{p+\log_2(p)}\\
        =& e^{(p+\log_2(p))\ln{(1 - \frac{1}{2^p} (\frac{r+2}{r+1})^p)}}.
    \end{split}
    \end{equation}
    According to $\ln x > \frac{1}{2}(x+\frac{1}{x})$ when $x\in (0,1)$, the power of $e$ follows
    \begin{equation}
    \begin{split}
        &(p+\log_2(p))\ln{(1 - \frac{1}{2^p} (\frac{r+2}{r+1})^p)}\\
        >& (p+\log_2(p)) \frac{1}{2}(-\frac{1}{d^p-1}-\frac{1}{d^p})\\
        >& -\frac{1}{2} (2p) (\frac{2}{d^p-1})=\frac{-2p}{d^p-1},
    \end{split}
    \end{equation}
    where $d=\frac{2(r+1)}{r+2}$.
\end{proof}

\section{Additional Analysis}
Because our above-mentioned algorithm is based on the result of the $QFT^\dagger$ is exact $(t-\tau-s)l \pmod{r}$, however, the $QFT^\dagger$ could lead to a superposition in many cases. Thus, in this section, we test whether the superposition will influence our upper conclusion. Moreover, we also propose an implementation of $\Gamma_\tau(M_a)$ which costs $O(nm^3)$ time.

\subsection{Algorithm}
With the state $|\phi_3\rangle$ before applying $QFT_{2^m}^\dagger$, we have
\begin{equation}
\begin{split}
    |\phi'_4\rangle =& (I^{\otimes n}\otimes QFT_{2^m}^\dagger \otimes I^{\otimes m} \otimes I)|\phi_3\rangle\\
    =&\frac{1}{2^{m}\sqrt{2^{n}r}} \sum^{r-1}_{l=0} \sum^{2^n-1}_{s=0} |s\rangle \left ( \sum^{2^m-1}_{\tilde{x}=0} \sum^{2^m - 1}_{x=0} e^{2 \pi i [(t -\tau - s)l/r - \tilde{x}/2^m] x} |\tilde{x}\rangle |\psi_l\rangle |0\rangle \right).
\end{split}
\end{equation}

Then, apply $U_g$ on the second and fourth registers, we get
\begin{equation}
\begin{split}
    |\phi'_5\rangle =\frac{1}{2^{m}\sqrt{2^{n}r}} \sum^{r-1}_{l=0} \sum^{2^n-1}_{s=0} |s\rangle & \left ( \sum^{2^m-1}_{\tilde{x}=1} \sum^{2^m - 1}_{x=0} e^{2 \pi i [(t -\tau - s)l/r - \tilde{x}/2^m] x} |\tilde{x}\rangle |\psi_l\rangle |0\rangle \right . \\
    & \left. + \sum^{2^m - 1}_{x=0} e^{2 \pi i [(t -\tau - s)l/r] x} |0\rangle |\psi_l\rangle |1\rangle \right ).
\end{split}
\end{equation}

To undo the operator applied before $|\phi'_5\rangle$, we apply their inverse operator to the quantum states $|\phi'_5\rangle$. Thus, after applying $QFT_{2^m}$ to the second register, the quantum state becomes
\begin{equation}
\begin{split}
    |\phi'_6\rangle =\frac{1}{2^{m}\sqrt{2^{m+n}r}} \sum^{r-1}_{l=0} \sum^{2^n-1}_{s=0} |s\rangle & \left ( \sum^{2^m-1}_{\tilde{x}=1} \sum^{2^m - 1}_{x=0} e^{2 \pi i [(t -\tau - s)l/r - \tilde{x}/2^m] x} \sum^{2^m-1}_{y=0} e^{2 \pi i \tilde{x} y /2^m}|y\rangle |\psi_l\rangle |0\rangle \right . \\
    & \left. + \sum^{2^m - 1}_{x=0} e^{2 \pi i [(t -\tau - s)l/r] x} \sum^{2^m-1}_{y=0} |y\rangle |\psi_l\rangle |1\rangle \right ).
\end{split}
\end{equation}
Then applying the inverse gate of $\Gamma_{\tau}(M_a)$ and $\Lambda(M_b)$, we have
\begin{equation}
\begin{split}
    |\phi'_8\rangle =\frac{1}{2^{m}\sqrt{2^{m+n}r}} \sum^{r-1}_{l=0} \sum^{2^n-1}_{s=0} |s\rangle & \left ( \sum^{2^m-1}_{\tilde{x}=1} \sum^{2^m - 1}_{x=0} e^{2 \pi i [(t -\tau - s)l/r - \tilde{x}/2^m] x} \sum^{2^m-1}_{y=0} e^{2 \pi i [\tilde{x}/2^m-(t -\tau - s)l/r] y }|y\rangle |\psi_l\rangle |0\rangle \right . \\
    & \left. + \sum^{2^m - 1}_{x=0} e^{2 \pi i (t -\tau - s)(x-y)l/r } \sum^{2^m-1}_{y=0} |y\rangle |\psi_l\rangle |1\rangle \right ).
\end{split}
\end{equation}
Applying $H$ gates to the first and second registers, the final state becomes
\begin{equation}
\begin{split}
    |\phi'_9\rangle =\frac{1}{2^{2m+n}\sqrt{r}} \sum^{r-1}_{l=0} \sum^{2^n-1}_{s=0} \sum^{2^n-1}_{s'=0} (-1)^{s \cdot s'} |s'\rangle & \left ( \sum^{2^m-1}_{\tilde{x}=1} \sum^{2^m - 1}_{x=0} e^{2 \pi i [(t -\tau - s)l/r - \tilde{x}/2^m] x} \sum^{2^m-1}_{y=0} e^{2 \pi i [\tilde{x}/2^m-(t -\tau - s)l/r] y }\right . \\ 
    &\times \sum^{2^m-1}_{y'=0} (-1)^{y \cdot y'}  
    |y'\rangle |\psi_l\rangle |0\rangle \\
    & \left. + \sum^{2^m - 1}_{x=0} e^{2 \pi i (t -\tau - s)(x-y)l/r } \sum^{2^m-1}_{y=0} \sum^{2^m-1}_{y'=0} (-1)^{y \cdot y'} |y'\rangle |\psi_l\rangle |1\rangle \right ).
\end{split}
\end{equation}

Then, only considering the state with the fourth register as $|1\rangle$, we also rewrite the $|\phi'_9\rangle$ as
\begin{equation}
\begin{split}
    \frac{1}{2^{2m+n}r} \sum^{2^n-1}_{s'=0} \sum^{2^m-1}_{y'=0} \sum^{r-1}_{k=0} \left[ \sum^{2^n-1}_{s=0} (-1)^{s \cdot s'} \sum^{2^m-1}_{y=0} (-1)^{y \cdot y'} \sum^{2^m-1}_{x=0} \sum^{r-1}_{l=0} e^{2 \pi i [(t-\tau-s) (x-y)-k] l/r} 
 \right] |s'\rangle |y'\rangle |a^k\rangle |1\rangle.
\end{split}
\end{equation}
Then if we have $t\in \{\tau+s \pmod{r}| s= 0,1,2,...,2^n-1 \}$, the quantum state whose third register is $|0^{m-1}1\rangle$ is
\begin{equation}
\begin{split}
    &\frac{1}{2^{2m+n}r} \sum^{2^n-1}_{s'=0} \sum^{2^m-1}_{y'=0} \left[ \sum^{2^n-1}_{s=0} (-1)^{s \cdot s'} \sum^{2^m-1}_{y=0} (-1)^{y \cdot y'} \sum^{2^m-1}_{x=0} \sum^{r-1}_{l=0} e^{2 \pi i [(t-\tau-s) (x-y)] l/r} \right] |s'\rangle |y'\rangle |0^{m-1}1\rangle |1\rangle \\
    =& \frac{1}{2^{2m+n}r} \sum^{2^n-1}_{s'=0} \sum^{2^m-1}_{y'=0} \left[ \sum^{2^n-1}_{s=0,s\neq t-\tau} (-1)^{s \cdot s'} \sum^{2^m-1}_{y=0} (-1)^{y \cdot y'} r
    + (-1)^{t\cdot s'} \sum^{2^m-1}_{y=0} (-1)^{y \cdot y'} 2^m r
    \right] |s'\rangle |y'\rangle |0^{m-1}1\rangle |1\rangle\\
    =& \frac{1}{2^{m+n}r} \sum^{2^n-1}_{s'=0} \left[ \sum^{2^n-1}_{s=0,s\neq t-\tau} (-1)^{s \cdot s'} r
    + (-1)^{t\cdot s'} 2^m r
    \right] |s'\rangle |0\rangle |0^{m-1}1\rangle |1\rangle.
\end{split}
\end{equation}

Otherwise, if the third register is not $|0^{m-1}1\rangle$, we have
\begin{equation}
\begin{split}
    &\frac{1}{2^{2m+n}r} \sum^{2^n-1}_{s'=0} \sum^{2^m-1}_{y'=0} \sum^{r-1}_{k=1} \left[ \sum^{2^n-1}_{s=0} (-1)^{s \cdot s'} \sum^{2^m-1}_{y=0} (-1)^{y \cdot y'} \sum^{2^m-1}_{x=0} \sum^{r-1}_{l=0} e^{2 \pi i [(t-\tau-s) (x-y)-k] l/r} \right] |s'\rangle |y'\rangle |a^k\rangle |1\rangle\\
    =&\frac{1}{2^{2m+n}r} \sum^{2^n-1}_{s'=0} \sum^{2^m-1}_{y'=0} \sum^{r-1}_{k=1} \left[ \sum^{2^n-1}_{s=0, s \neq t-\tau} (-1)^{s \cdot s'} \sum^{2^m-1}_{y=0} (-1)^{y \cdot y'} r \right] |s'\rangle |y'\rangle |a^k\rangle |1\rangle\\
    =&\frac{1}{2^{m+n}r} \sum^{2^n-1}_{s'=0} \sum^{r-1}_{k=1} \left[ \sum^{2^n-1}_{s=0, s \neq t-\tau} (-1)^{s \cdot s'} r \right] |s'\rangle |0\rangle |a^k\rangle |1\rangle.
\end{split}
\end{equation}

If $t \notin \{\tau+s \pmod{r}| s= 0,1,2,...,2^n-1 \}$, the state whose fourth register is $|1\rangle$ and third register is $|a^k\rangle$ for any $k$ is
\begin{equation}
\begin{split}
    &\frac{1}{2^{2m+n}r} \sum^{2^n-1}_{s'=0} \sum^{2^m-1}_{y'=0} \left[ \sum^{2^n-1}_{s=0} (-1)^{s \cdot s'} \sum^{2^m-1}_{y=0} (-1)^{y \cdot y'} \sum^{2^m-1}_{x=0} \sum^{r-1}_{l=0} e^{2 \pi i [(t-\tau-s) (x-y)-k] l/r} 
 \right] |s'\rangle |y'\rangle |a^k\rangle |1\rangle\\
    = &\frac{1}{2^{2m+n}r} \sum^{2^n-1}_{s'=0} \sum^{2^m-1}_{y'=0} \left[ \sum^{2^n-1}_{s=0} (-1)^{s \cdot s'} \sum^{2^m-1}_{y=0} (-1)^{y \cdot y'} (\sum^{2^m-1}_{x=0,(t-\tau-s) (x-y)-k \neq 0} \sum^{r-1}_{l=0} e^{2 \pi i [(t-\tau-s) (x-y)-k] l/r} 
 + \sum^{r-1}_{l=0} 1)\right]\\ &|s'\rangle |y'\rangle |a^k\rangle |1\rangle\\
 = &\frac{1}{2^{2m+n}r} \sum^{2^n-1}_{s'=0} \sum^{2^m-1}_{y'=0} \left[ \sum^{2^n-1}_{s=0} (-1)^{s \cdot s'} \sum^{2^m-1}_{y=0} (-1)^{y \cdot y'} r\right] |s'\rangle |y'\rangle |a^k\rangle |1\rangle\\
 = &\frac{1}{2^{m}} |0\rangle |0\rangle |a^k\rangle |1\rangle.
\end{split}
\end{equation}

The upper results represent that our algorithm can achieve similar results even if the output of $QFT_{2^m}^\dagger$ is a superposition state. Then, we analyze the success probability as follows.

\subsection{Success probability}

Following \cite{AA}, let $\tilde{x}$ be the the classical result of measuring the $QFT^\dagger$ of some quantum state $|S_{(t-\tau-s)l/r}\rangle=\frac{1}{\sqrt{r}}\sum^{r-1}_{j=0} e^{2\pi i (t-\tau-s)l j/r}|j\rangle$ in the computational basis. If $(t-\tau-s)l=0$, because $2^m (t-\tau-s)l/r$ is an integer, the probability of $\tilde{x}=0$ is $1$. For each $(t-\tau-s)l\neq 0$, the probability of $\tilde{x}=0$ follows
\begin{equation}
    Pr(\tilde{x}=0| (t-\tau-s)l\neq0) = \frac{\sin^2(N\Delta\pi)}{N^2\sin^2(\Delta\pi)} \leq \frac{1}{(2N\Delta)^2}
\end{equation}
where $\Delta=d((t-\tau-s)l/r, 0)$, and $d((t-\tau-s)l/r, 0) = \min_{z\in \mathbb{Z}}|z+(t-\tau-s)l/r|$. Therefore, we propose the following lemma to analyze the total probability of $\tilde{x}=0$, which represents the total probability:
\begin{lemma} \label{not_exact_4_1}
    Given integers $t,r,m,n \in \mathbb{Z}^+$, $s \in S$ such that $(t-\tau-s)l/r\in C$ where $C \subset \{x/r| x\in \mathbb{Z}, x/r\in [0,1)\}$, let $\tilde{x}$ be the discrete random variable corresponding to the classical result of measuring the second register of $|\phi'_5\rangle$ in the computational basis. If there is $s$ such that $(t-\tau-s)l/r=0$, then the probability of $\tilde{x}=0$ follows
    \begin{equation}
        Pr(\tilde{x}=0| \{0\} \subset C)< \frac{1}{2^m 2^n} + \frac{1}{2^n}+ \frac{1}{r}.
    \end{equation}
    Else if there is not $s$ such that $(t-\tau-s)l/r=0$, then the probability of $\tilde{x}=0$ follows
    \begin{equation}
        Pr(\tilde{x}=0| \{0\} \not\subset C)> \frac{1}{r}.
    \end{equation}
\end{lemma}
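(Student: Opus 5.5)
The plan is to compute $Pr(\tilde{x}=0)$ directly from $|\phi'_4\rangle$, since $U_g$ only acts as a control on the second register. I will write the probability as an average over the uniformly weighted pairs $(s,l)$, with $s\in\{0,\dots,2^n-1\}$ and $l\in\{0,\dots,r-1\}$. The amplitude of $\tilde{x}=0$ for a fixed pair is a geometric sum, because the branches $|s\rangle|\psi_l\rangle$ are mutually orthogonal. This gives
\begin{equation*}
Pr(\tilde{x}=0)=\frac{1}{2^n r}\sum_{s=0}^{2^n-1}\sum_{l=0}^{r-1} P_{s,l},\qquad P_{s,l}=\frac{\sin^2(2^m\Delta_{s,l}\pi)}{2^{2m}\sin^2(\Delta_{s,l}\pi)},
\end{equation*}
with $\Delta_{s,l}=d((t-\tau-s)l/r,0)$ and the convention $P_{s,l}=1$ when $\Delta_{s,l}=0$. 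Here the $N$ in the displayed Fej\'er-kernel bound before the lemma is $2^m$.

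Next I split the pairs into zero-phase pairs, where $(t-\tau-s)l\equiv 0 \pmod r$, and nonzero-phase pairs. The zero-phase pairs are all pairs with $l=0$ (there are $2^n$ of them). When $t\in S_{n,\tau}$ they also include all pairs with $s=t-\tau$ (there are $r$ of them, overlapping in one). I will treat $r$ as prime, or more generally only count these generic zeros, as Lemma \ref{MAIN} implicitly does. Each zero-phase pair contributes exactly $1$. In the case $\{0\}\not\subset C$ this already gives $Pr\ge \frac{2^n}{2^n r}=\frac1r$. Strict inequality should then follow from any nonzero-phase pair with $P_{s,l}>0$. Such a pair exists whenever $2^m\Delta\notin\mathbb{Z}$, and this holds for generic $r$ because $r$ is not a power of two.

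In the case $\{0\}\subset C$, the zero-phase pairs contribute $\frac{2^n+r-1}{2^n r}=\frac1r+\frac1{2^n}-\frac1{2^n r}$. For the nonzero-phase pairs I use $P_{s,l}\le \frac{1}{(2\cdot 2^m\Delta)^2}$, where $\Delta\ge 1/r$. For each fixed $s\neq t-\tau$ with $\gcd(t-\tau-s,r)=1$, the map $l\mapsto (t-\tau-s)l$ permutes the nonzero residues. The inner sum over $l$ is therefore at most $2\sum_{j\ge1}\frac{r^2}{4\cdot 2^{2m}j^2}=\frac{\pi^2r^2}{12\cdot 2^{2m}}$. Summing over $s$ and dividing by $2^n r$ yields a term of order $r/2^{2m}$. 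Using $m=\lceil\log_2 r\rceil+\log_2\frac1\epsilon$ this is at most $\frac{\pi^2\epsilon^2}{12\cdot 2^m}\cdot\frac{r}{2^{m}}$. This should be at most $\frac{1}{2^m2^n}+\frac{1}{2^nr}$, which absorbs the $-\frac{1}{2^nr}$ and gives the claimed strict bound $\frac1{2^m2^n}+\frac1{2^n}+\frac1r$.

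The main obstacle is this tail estimate. I must make sure the accumulated Fej\'er mass over all $s$ is genuinely below $\frac1{2^m2^n}+\frac1{2^n r}$. This requires $\epsilon$ small enough, for instance $\epsilon\le 2^{-n/2}$, or a sharper per-$s$ count. Non-coprime $s$ also need care: for those, some $l\neq0$ give extra zero phases, so I would either assume $r$ prime or bound those extra zeros by $\gcd$-counting. If the constants refuse to close, my fallback is to state the bound with $\epsilon$-dependence and note that it reduces to the lemma's form for the paper's choice of $m$.
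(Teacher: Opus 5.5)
The gap you flag in the tail estimate is real and cannot be closed: the first inequality is false in the paper's own parameter regime, and the paper's proof reaches it only through an invalid step.

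\textbf{Comparison with the paper.} Your setup is the paper's. You write $\Pr(\tilde x=0)$ as the uniform average over $(s,l)$ of the Fej\'er weights. The zero-phase pairs ($l=0$, and $s=t-\tau$ when $t\in S_{n,\tau}$) each contribute $1$, and the rest are bounded by $P_{s,l}\le (2\cdot 2^m\Delta)^{-2}$. Your argument for the second inequality is also the paper's. The two routes diverge in the tail. The paper bounds $1/(2\Delta)^2$ by $2r^2/(2(t-\tau-s)l)^2$. This ignores the reduction modulo $1$ built into $\Delta$, and it fails, e.g., for $(t-\tau-s)l=r+1$, where $\Delta=1/r$. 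The paper then factorizes the double sum as $\sum_l l^{-2}\cdot\sum_s (t-\tau-s)^{-2}\le \pi^4/36$. That factorization is the only source of the extra $2^{-n}$ in its tail $\frac{\pi^4 r}{72\cdot 2^{2m+n}}<\frac{1}{2^{m+n}}$. Your permutation argument is the correct treatment. It shows that every $s$ coprime to $r$ contributes a tail of the same size, with no decay in $s$.

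\textbf{The first inequality is false as stated.} Put $F(\theta)=\frac{\sin^2(2^m\pi\theta)}{2^{2m}\sin^2(\pi\theta)}$ and $\rho=2^m\bmod r$. Then exactly
$\sum_{j=0}^{r-1}F(j/r)=\frac{r}{2^{2m}}\#\{(x,x')\in[0,2^m)^2 : x\equiv x'\pmod r\}=1+\frac{\rho(r-\rho)}{2^{2m}}$.
So for prime $r>2^n$ and $t\in S_{n,\tau}$, your decomposition evaluates to
\[
\Pr(\tilde x=0)=\frac{1}{2^n}+\frac{2^n-1}{2^n r}\Bigl(1+\frac{\rho(r-\rho)}{2^{2m}}\Bigr).
\]
The claimed bound therefore holds if and only if $(2^n-1)\rho(r-\rho)<2^{2m}+r2^m$. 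Take $r=103$, $m=8$ (that is, $\epsilon=1/2$) and $n=6$; these satisfy $2^n<r$ and $n<m-1$. Then $\rho=50$ and $63\cdot 2650=166950>91904$, so $\Pr(\tilde x=0)\approx 0.02557>0.02539$.

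Composite $r$ makes things worse. An $s$ with $d=\gcd(t-\tau-s,r)\ge 3$ adds $d-1$ extra zero-phase pairs, i.e.\ mass $\frac{d-1}{2^n r}$. When $r<2^m$ this alone exceeds the slack $\frac{1}{2^n}\bigl(\frac{1}{2^m}+\frac{1}{r}\bigr)$. So your genericity assumption is genuinely needed for the upper bound, though it is harmless for the lower one.

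\textbf{What to do.} Your fallback should become the statement: the first inequality holds under an extra hypothesis such as your $\epsilon\le 2^{-n/2}$. With your constant $\pi^2/12$ this does close. The exact formula shows that $\epsilon<2/\sqrt{2^n-1}$ suffices for prime $r$. Also repair the garbled estimate: from $r\le \epsilon 2^m$ the tail is at most $\frac{(2^n-1)\pi^2 r}{12\cdot 2^n 2^{2m}}\le \frac{\pi^2\epsilon}{12\cdot 2^m}$, not $\frac{\pi^2\epsilon^2}{12\cdot 2^m}\cdot\frac{r}{2^m}$.
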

\begin{proof}
    Considering $|\phi'_5\rangle$, the probability of the fourth register is $1$ equals
    \begin{equation}
    \begin{split}
        \frac{1}{2^{2m}2^{n}r}& \left |\sum^{r-1}_{l=0} \sum_{s\in S} \langle s| \sum^{2^m - 1}_{x=0} e^{-2 \pi i [(t -\tau - s)l/r] x} \langle 0| \langle \psi_l| \langle 1| \right .\\
        &\left . \sum^{r-1}_{l=0} \sum_{s\in S} |s\rangle \sum^{2^m - 1}_{x=0} e^{2 \pi i [(t -\tau - s)l/r] x} |0\rangle |\psi_l\rangle |1\rangle \right | \\
        = & \frac{1}{2^{2m}2^{n}r} \sum^{r-1}_{l=0} \sum_{s\in S} \left (\sum^{2^m - 1}_{x=0} e^{-2 \pi i [(t -\tau - s)l/r] x} \right ) \left (\sum^{2^m - 1}_{x=0} e^{2 \pi i [(t -\tau - s)l/r] x} \right )\\
        < & \frac{1}{2^{2m}2^{n}r} \sum^{r-1}_{l=1}\sum_{s\in S, s\neq t-\tau} \frac{1}{(2\Delta)^2}  + \frac{1}{r}+ \frac{1}{2^n}\\
        < & \frac{1}{2^{2m}2^{n}r} \sum^{r-1}_{l=1}\sum_{s\in S, s\neq t-\tau} \frac{2 r^2}{(2(t-\tau-s)l)^2}  + \frac{1}{r}+ \frac{1}{2^n}\\
        < & \frac{r^2}{2^{2m+n+1}r} \sum^{r-1}_{l=1}\frac{1}{l^2}\sum_{s\in S, s\neq t-\tau} \frac{1}{(t-\tau-s)^2}  + \frac{1}{r}+ \frac{1}{2^n}\\
        < & \frac{r^2}{2^{2m+n+1}r} \frac{\pi^4}{36}  + \frac{1}{r}+ \frac{1}{2^n} < \frac{1}{2^{m+n}} + \frac{1}{r}+ \frac{1}{2^n}.
    \end{split}
    \end{equation}
    Similarly, we can compute the second half with a similar process:
    \begin{equation}
    \begin{split}
        \frac{1}{2^{2m}2^{n}r}& \left|\sum^{r-1}_{l=0} \sum_{s\in S} \langle s| \sum^{2^m - 1}_{x=0} e^{-2 \pi i [(t -\tau - s)l/r] x} \langle 0| \langle \psi_l| \langle 1| \right. \\
        &\left. \sum^{r-1}_{l=0} \sum_{s\in S} |s\rangle \sum^{2^m - 1}_{x=0} e^{2 \pi i [(t -\tau - s)l/r] x} |0\rangle |\psi_l\rangle |1\rangle \right| \\
        = & \frac{1}{2^{2m}2^{n}r} \sum^{r-1}_{l=0} \sum_{s\in S} \left (\sum^{2^m - 1}_{x=0} e^{-2 \pi i [(t -\tau - s)l/r] x} \right ) \left (\sum^{2^m - 1}_{x=0} e^{2 \pi i [(t -\tau - s)l/r] x} \right )\\
        = & \frac{1}{2^{2m}2^{n}r} \sum^{r-1}_{l=1} \sum_{s\in S} \left (\sum^{2^m - 1}_{x=0} e^{-2 \pi i [(t -\tau - s)l/r] x} \right ) \left (\sum^{2^m - 1}_{x=0} e^{2 \pi i [(t -\tau - s)l/r] x} \right ) + \frac{1}{r}
        > \frac{1}{r}.
    \end{split}
    \end{equation}
\end{proof}
According to Proposition \ref{t_in} and \ref{t_notin}, if the $QFT^\dagger$ is exact $(t-\tau-s)l \pmod{r}$, the probability of observing $0$ is no less than $\frac{1}{2^n}$ and $\frac{1}{r}$ separately. Then, knowing that Lemma \ref{not_exact_4_1}, we have
\begin{lemma} \label{not_exact_3_1}
    In our algorithm, for integer $t,r,m,n,\tau \in \mathbb{Z}^+$, if $t\in \{\tau+s \pmod{r}| s= 0,1,2,...,2^n-1 \}$, then the probability of the fourth register is $1$ and the third register is $1$ is no less than $\frac{(2^m-1)^2}{2^{2m+n}}$. And otherwise if the fourth register is $1$ and $t\notin \{\tau+s \pmod{r}| s= 0,1,2,...,2^n-1 \}$, the probability of observing $0^{m-1}1$ from the third and fourth registers is $\frac{1}{2^{2m}}$.
\end{lemma}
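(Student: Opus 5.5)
The plan is to read both probabilities directly off the two unnormalized branches of $|\phi'_9\rangle$ that were just computed, namely the components in which the fourth register is $|1\rangle$. Each such branch has already been collapsed onto $|y'\rangle=|0\rangle$ in the second register. The probability of each event is then the squared norm of the corresponding component, summed over the basis states of the first register and, where relevant, of the third register. The case $t\in S_{n,\tau}$ and the case $t\notin S_{n,\tau}$ are handled separately, exactly as in the derivation preceding the statement.

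\emph{Case $t\in S_{n,\tau}$.} I will start from the component with the third register equal to $|0^{m-1}1\rangle$:
\[
\frac{1}{2^{m+n}r}\sum_{s'}\Bigl[\sum_{s\neq t-\tau}(-1)^{s\cdot s'}r+(-1)^{(t-\tau)\cdot s'}2^m r\Bigr]|s'\rangle|0\rangle|0^{m-1}1\rangle|1\rangle .
\]
Rather than summing over all $s'$, I will keep only the $s'=0$ term, since this suffices for a lower bound. For $s'=0$ the bracket equals $(2^n-1)r+2^m r$. I will rewrite it as $r\bigl(2^m+2^n-1\bigr)$, which is at least $r(2^m-1)$. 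After dividing by $2^{m+n}r$ and squaring, this gives a probability of at least $(2^m-1)^2/2^{2m+2n}$.

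This falls short of the claimed bound by a factor $2^{n}$. I will recover that factor from the $s'\neq 0$ terms. For $s'\neq0$ the bracket equals $r\bigl((-1)^{(t-\tau)\cdot s'}(2^m-1)+\sum_{s}(-1)^{s\cdot s'}\bigr)$, and the full character sum vanishes for $s'\neq 0$. Each of these terms therefore contributes $(2^m-1)^2/2^{2m+2n}$. Summing over all $2^n$ values of $s'$, and keeping the larger $s'=0$ term bounded below by the same quantity, gives at least $2^n(2^m-1)^2/2^{2m+2n}=(2^m-1)^2/2^{2m+n}$, which is the claimed bound.

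\emph{Case $t\notin S_{n,\tau}$.} Here the preceding computation already shows that the component with the third register in state $|a^k\rangle$ and the fourth register in state $|1\rangle$ is $\frac{1}{2^m}|0\rangle|0\rangle|a^k\rangle|1\rangle$. Taking $k=0$, so that $a^0$ corresponds to $0^{m-1}1$, the squared amplitude is exactly $1/2^{2m}$, which is the claimed value.

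The main obstacle is the normalisation bookkeeping in the $t\in S_{n,\tau}$ case. I need to check that the prefactor $\frac{1}{2^{m+n}r}$ carried over from the earlier display is consistent, and that the Walsh–Hadamard cancellation $\sum_s(-1)^{s\cdot s'}=0$ for $s'\neq0$ holds over the full range $s=0,\dots,2^n-1$. I also need to confirm that $(t-\tau)$ is interpreted modulo $r$ within $\{0,\dots,2^n-1\}$, so that the excluded index lies in the range of the sum. Once these points are confirmed, the remaining work is routine algebra.
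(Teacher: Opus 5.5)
Your argument is the paper's own proof almost line for line. You split off $s'=0$, use $\sum_s(-1)^{s\cdot s'}=0$ for $s'\neq 0$ to get brackets $\pm(2^m-1)r$, bound $2^m+2^n-1\ge 2^m-1$, and read off the coefficient $2^{-m}$ at $k=0$ when $t\notin S_{n,\tau}$. You even correct the paper's $(-1)^{t\cdot s'}$ to $(-1)^{(t-\tau)\cdot s'}$.

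Relative to the displayed branches of $|\phi'_9\rangle$, the algebra is fine. The genuine gap is that those displays are wrong, and the checks you plan (prefactors, the character sum, the range of $t-\tau$) are not where they break. The faulty step is $\sum_{x}\sum_{l}e^{2\pi i[(t-\tau-s)(x-y)-k]l/r}=r$ for $s\not\equiv t-\tau$, and for all $s$ when $t\notin S_{n,\tau}$. This double sum is $r$ times the number of $x\in\{0,\dots,2^m-1\}$ with $(t-\tau-s)(x-y)\equiv k \pmod r$. Because $2^m>r$ and $\gcd(t-\tau-s,r)$ may exceed $1$, that count is about $2^m\gcd(t-\tau-s,r)/r$ (or $0$) and depends on $y$. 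So $\sum_y(-1)^{y\cdot y'}(\cdot)$ does not collapse to $2^m r\,\delta_{y',0}$.

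A consistency check exposes this at once. The displayed $t\notin S_{n,\tau}$ branch $\sum_k 2^{-m}|0\rangle|0\rangle|a^k\rangle|1\rangle$ has squared norm $r/2^{2m}<1/r$. Yet no gate after $U_g$ touches the fourth register, and the $l=0$ term alone gives $\Pr(\text{fourth}=1)\ge 1/r$, as in Lemma~\ref{not_exact_4_1}. The paper's proof reads off the same display and has the same defect.

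The correct computation is short. The final Hadamards act only on registers 1 and 2, so work with $|\phi'_8\rangle$. There the amplitude of $|s\rangle|y\rangle|0^{m-1}1\rangle|1\rangle$ is $N_s(y)/(2^m\sqrt{2^{m+n}})$, where $N_s(y)=\#\{x<2^m : (t-\tau-s)(x-y)\equiv 0 \pmod r\}$. Hence the joint probability is $2^{-(3m+n)}\sum_{s,y}N_s(y)^2$.

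\textbf{Case $t\in S_{n,\tau}$.} The single index $s\equiv t-\tau$ has $N_s(y)=2^m$ for all $y$ and contributes $2^{-n}\ge (2^m-1)^2/2^{2m+n}$. So the first claim is true, but its proof should be this orthogonality bound, not your per-$s'$ amplitudes, which are not the actual ones.

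\textbf{Case $t\notin S_{n,\tau}$.} Here $N_s(y)\ge 1$ always and $N_s(y)\ge 2$ whenever $y+r<2^m$, so the probability is strictly larger than $2^{-2m}$. The asserted equality is false, and no argument can establish it. In the paper's own example ($r=35$, $m=7$, $n=3$, $\tau=0$, $t=23$) the value is $137632/2^{24}\approx 8.2\times 10^{-3}$, not $2^{-14}$. Dividing by $\Pr(\text{fourth}=1)=8476/2^{17}$ gives $\approx 0.1269$, exactly the conditional frequency reported in the numerical section.
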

\begin{proof}
    With quantum state $|\phi'_9\rangle$, if $t\in \{\tau+s \pmod{r}| s= 0,1,2,...,2^n-1 \}$, the probability of the third register is $0^{m-1}1$ and the fourth register is $1$ equals
    \begin{equation}
    \begin{split}
    &\left |\frac{1}{2^{m+n}r} \sum^{2^n-1}_{s'=0} \left[ \sum^{2^n-1}_{s=0,s\neq t-\tau} (-1)^{s \cdot s'} r
    + (-1)^{t\cdot s'} 2^m r
    \right] \langle s'| \langle0| \langle 0^{m-1}1| \langle1| \right .\\
    &\left . \frac{1}{2^{m+n}r} \sum^{2^n-1}_{s'=0} \left[ \sum^{2^n-1}_{s=0,s\neq t-\tau} (-1)^{s \cdot s'} r
    + (-1)^{t\cdot s'} 2^m r
    \right] |s'\rangle |0\rangle |0^{m-1}1\rangle |1\rangle \right |\\
    = & \frac{1}{2^{2m+2n}r^2} \sum^{2^n-1}_{s'=0} \left[ \sum^{2^n-1}_{s=0,s\neq t-\tau} (-1)^{s \cdot s'} r
    + (-1)^{t\cdot s'} 2^m r
    \right] \left[ \sum^{2^n-1}_{s=0,s\neq t-\tau} (-1)^{s \cdot s'} r
    + (-1)^{t\cdot s'} 2^m r \right] \\
    = & \frac{1}{2^{2m+2n}r^2} \sum^{2^n-1}_{s'=0} \left[ -(-1)^{t \cdot s'} r
    + (-1)^{t\cdot s'} 2^m r
    \right] \left[ -(-1)^{t \cdot s'} r
    + (-1)^{t\cdot s'} 2^m r \right]\\
    = &\frac{1}{2^{2m+2n}r^2} (\sum^{2^n-1}_{s'=1}(2^m-1)^2 r^2 + ((2^n-1)r+2^m r )^2)\\
    > &\frac{(2^n-1)(2^m-1)^2 r^2 + (2^m+2^n-1)^2r^2}{2^{2m+2n}r^2}\\
    > &\frac{(2^n-1)(2^m-1)^2 + (2^{m}-1)^2}{2^{2m+2n}}= \frac{2^n(2^{m}-1)^2}{2^{2m+2n}} = \frac{(2^m-1)^2}{2^{2m+n}}.
    \end{split}
    \end{equation}
    If $t\notin \{\tau+s \pmod{r}| s= 0,1,2,...,2^n-1 \}$, the probability of observing $1$ from the third register is
    \begin{equation}
    \begin{split}
        &\left |\frac{1}{2^{m}} \langle0| \langle0| \langle a^k| \langle1| \frac{1}{2^{m}} |0\rangle |0\rangle |a^k\rangle |1\rangle \right| = \frac{1}{2^{2m}}.
    \end{split}
    \end{equation}
    
\end{proof}
Combine the Lemma \ref{not_exact_4_1} with \ref{not_exact_3_1}, we can compute the probability of observing $0^{m-1}1$ from the third register if the fourth register is $1$:
\begin{lemma}
    In our algorithm, given integers $t,r,m,n,\tau \in \mathbb{Z}^+$, if $t\in \{\tau+s \pmod{r}| s= 0,1,2,...,2^n-1 \}$ and the fourth register is $1$, then the probability of obtaining $0^{m-1}1$ from the third register is no less than:
    \begin{equation}
        \frac{\frac{(2^m-1)^2}{2^{2m+n}}}{\frac{1}{2^m 2^n} + \frac{1}{2^n}+ \frac{1}{r}}>\frac{(2^m-1)^2}{2^{2m+1}}.
    \end{equation}
    Otherwise, if $t\notin \{\tau+s \pmod{r}| s= 0,1,2,...,2^n-1 \}$ and the fourth register is $1$, the probability of obtaining $0^{m-1}1$ from the third register is no greater than:
    \begin{equation}
        \frac{\frac{1}{2^{2m}}}{\frac{1}{r}} = \frac{r}{2^{2m}} <\frac{1}{2^m}.
    \end{equation}
\end{lemma}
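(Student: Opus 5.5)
The plan is to treat the conditional probability as a ratio of a joint probability to a marginal, and to bound the two pieces with the estimates already proved. Concretely, Pr(third register $=0^{m-1}1$ $\mid$ fourth register $=1$) equals Pr(third $=0^{m-1}1$, fourth $=1$) divided by Pr(fourth $=1$). The fourth register is set only by $U_g$ at step $|\phi'_5\rangle$. Every later operation acts trivially on it, so Pr(fourth $=1$) equals the probability of $\tilde{x}=0$ computed in Lemma \ref{not_exact_4_1}.

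For the case $t\in S_{n,\tau}$, I will take the numerator lower bound $\frac{(2^m-1)^2}{2^{2m+n}}$ from Lemma \ref{not_exact_3_1}. For the denominator I will use the upper bound $\frac{1}{2^m2^n}+\frac{1}{2^n}+\frac{1}{r}$ from Lemma \ref{not_exact_4_1}, in the branch where some $s$ makes $(t-\tau-s)l/r=0$, which is exactly the branch $t\in S_{n,\tau}$. Dividing gives the first displayed expression. To get the cleaner bound, I will multiply numerator and denominator by $2^n$, turning the denominator into $\frac{1}{2^m}+1+\frac{2^n}{r}$. Since $2^n<r$ and $2^{-m}\le 1$, this denominator is less than $2+\frac{1}{2^m}$, and I need it to be at most $2$. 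I expect this to be the main obstacle, because the constants in Lemma \ref{not_exact_4_1} are loose. I would first try to sharpen the $\frac{1}{r}$ term there: the $l=0$ contribution is really $\frac{1}{r}$ times a fraction of the $s$-values, which should absorb the $\frac{1}{2^m}$. Failing that, I would use $2^n\le r/2$, which follows from $n<m-1$ together with the choice of $m$ in Algorithm \ref{algorithm_strategy}. This gives $1+\frac{1}{2^m}+\frac12<2$ and hence the bound $\frac{(2^m-1)^2}{2^{2m+1}}$.

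For the case $t\notin S_{n,\tau}$, I will take the numerator $\frac{1}{2^{2m}}$ from Lemma \ref{not_exact_3_1} and the denominator lower bound $\frac{1}{r}$ from the second branch of Lemma \ref{not_exact_4_1}. Because the denominator is bounded below, the ratio is bounded above, giving $\frac{r}{2^{2m}}$. Finally, since $m\ge\lceil\log_2 r\rceil$ we have $r\le 2^m$, with strict inequality whenever $\epsilon<1$ or $r$ is not a power of two. That yields $\frac{r}{2^{2m}}<\frac{1}{2^m}$.
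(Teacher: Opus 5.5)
Your decomposition is the same as the paper's. You divide the joint bound of Lemma \ref{not_exact_3_1} by the marginal bound of Lemma \ref{not_exact_4_1}, with the bound directions chosen correctly in both cases, and the $t\notin S_{n,\tau}$ case goes through exactly as you describe. The gap is in the last simplification of the first case, to $\frac{(2^m-1)^2}{2^{2m+1}}$.

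After clearing denominators you need $\frac{1}{2^m}+\frac{2^n}{r}<1$. Your fallback justification, $2^n\le r/2$, does not follow from $n<m-1$ and $m=\lceil\log_2 r\rceil+\log_2\frac{1}{\epsilon}$. Take the paper's own example $r=35$, $\epsilon=\frac12$, $m=7$: the choice $n=5$ is allowed and gives $2^n=32>r/2$. For smaller $\epsilon$, the condition $n<m-1$ does not even force $2^n<r$; that is a separate standing assumption. Your other route, sharpening the constants in Lemma \ref{not_exact_4_1}, is not needed and is left unspecified.

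The missing observation is integrality. Since $2^n$ and $r$ are integers with $2^n<r$, you have $2^n\le r-1$, so
\[
\frac{1}{2^m}+\frac{2^n}{r}\;\le\;\frac{1}{2^m}+1-\frac{1}{r}\;<\;1
\]
as soon as $r<2^m$. That is precisely the fact you already invoke for the second case, and the paper has it via $r+1\le 2^m/2$. This gives the strict inequality you need, not merely ``at most $2$''.

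This is also the inequality $2^m+\frac{2^{2m+n}}{r}<2^{2m}$ that the paper uses, without comment, to pass from $\frac{(2^m-1)^2}{2^m+2^{2m}+2^{2m+n}/r}$ to $\frac{(2^m-1)^2}{2^{2m+1}}$. With this one-line fix your argument is complete. You were right that this step needs justification, which the paper does not supply.
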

\begin{proof}
As proved in Lemma \ref{not_exact_4_1} and \ref{not_exact_3_1}, if $t\in \{\tau+s \pmod{r}| s= 0,1,2,...,2^n-1 \}$ and the fourth register is $1$, the probability of obtaining $0^{m-1}1$ from the third register is no less than
\begin{equation}
\begin{split}
    \frac{\frac{(2^m-1)^2}{2^{2m+n}}}{\frac{1}{2^m 2^n} + \frac{1}{2^n}+ \frac{1}{r}}
    = \frac{(2^m-1)^2}{2^m + 2^{2m} + \frac{2^{2m+n}}{r}} > \frac{(2^m-1)^2}{2^{2m} + 2^{2m}} = \frac{(2^m-1)^2}{2^{2m+1}}.
\end{split}
\end{equation}
Then, if $t\in \{\tau+s \pmod{r}| s= 0,1,2,...,2^n-1 \}$ and the fourth register is $1$, we also have the similar results.
\end{proof}
According to this lemma, the probability of observing the $1$ from the third register is still very distinguishable even if the $QFT$ is not exact. Thus, as discussed in Section \ref{Strategy}, if we measure the output of our algorithm for $p$ times, the success probability of our algorithm can be described as:
\begin{theorem}
    With our Algorithm \ref{algorithm_strategy}, if the number of itereations is $p$, then the success probability $\mathcal{P}'$ follows
    \begin{equation}
        \mathcal{P}'>\left(1-(1-\frac{(2^m-1)^2}{2^{2m+1}})^{p}\right)^{\lceil \log_2 r \rceil} > \frac{1}{e^{\frac{2p}{d^p-1}}} ,
    \end{equation}
    where $p+\log_2 p \leq \log_2 r$ and $d=\frac{2(r+1)}{r+2}$.
\end{theorem}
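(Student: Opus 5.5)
The proof mirrors the one for Theorem \ref{success_prob}, with the exact-QFT per-trial probability $\frac{1}{2+\frac{2}{r}}$ replaced by the non-exact bound $q:=\frac{(2^m-1)^2}{2^{2m+1}}$ from the last lemma. The argument has three steps.

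\textbf{Step 1: one stage of Algorithm \ref{algorithm_strategy}.} Condition on the fourth register being $1$ and run Algorithm \ref{DQDLA} $p$ times. When $t\in S_{n,\tau}$, the preceding lemma says each run returns True with probability at least $q$. The runs are independent, so the probability of at least one True is at least $1-(1-q)^p$. This is the analogue of Lemma \ref{Prob_iteration}. When $t\notin S_{n,\tau}$, a single run returns True with probability at most $\frac{r}{2^{2m}}<\frac{1}{2^m}$. With $m=\lceil\log_2 r\rceil+\log_2\frac1\epsilon$, this false-positive rate is negligible. Following the proof of Theorem \ref{success_prob}, I will impose $p+\log_2 p\le \log_2 r$ so that the ``not in'' branch does not lower the overall bound.

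\textbf{Step 2: combine the stages.} The strategy uses at most $\lceil\log_2 r\rceil$ set tests, and a stage is correct whenever its test is correct. Treating the stages as independent, exactly as in Theorem \ref{success_prob}, multiplying the per-stage bounds gives
\[
\mathcal{P}'>\left(1-(1-q)^p\right)^{\lceil\log_2 r\rceil},
\]
which is the first inequality.

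\textbf{Step 3: reduce to the earlier bound.} For the second inequality I compare $1-q$ with $\frac{1}{d}=\frac{r+2}{2(r+1)}$. Write $q=\frac12\bigl(1-2^{-m}\bigr)^2\ge \frac12\bigl(1-2^{1-m}\bigr)$. Since $2^m\ge r$, and in fact $2^m\gg r$ once the $\log_2\frac1\epsilon$ term is included, we get $2^{1-m}\le \frac{1}{r+1}$. Hence $q\ge\frac{r}{2(r+1)}$, so $1-q\le\frac{r+2}{2(r+1)}=\frac1d$. It follows that $1-(1-q)^p\ge 1-d^{-p}$, and the right-hand side of Step 2 is at least $(1-d^{-p})^{\lceil\log_2 r\rceil}$. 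From here I would copy the chain in the proof of Theorem \ref{success_prob}: bound $\lceil\log_2 r\rceil$ using $p+\log_2 p\le\log_2 r$, write the power as an exponential, and bound the logarithm to reach $e^{-\frac{2p}{d^p-1}}$.

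\textbf{Main obstacle.} The delicate point is Step 3, in two places. First, $2^{1-m}\le\frac{1}{r+1}$ needs $2^m\ge 2(r+1)$; the choice of $m$ gives this provided $\epsilon\le\frac14$, and I would state that assumption explicitly. Second, the exponent manipulation: as printed, $\lceil\log_2 r\rceil$ is replaced by $p+\log_2 p$, which the constraint bounds from above rather than below, and the inequality $\ln x>\frac12(x+\frac1x)$ is not correct as written. I would instead use $\ln(1-y)\ge -\frac{y}{1-y}$ with $y=d^{-p}$, which gives the exponent $-\lceil\log_2 r\rceil\frac{1}{d^p-1}$. I would then need an honest bound of $\lceil\log_2 r\rceil$ by $2p$. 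That does not follow from the stated constraint alone, so I would add a hypothesis that $p$ is close to its allowed maximum, for example $\log_2 r\le 2p-1$, to obtain the claimed $e^{-\frac{2p}{d^p-1}}$.
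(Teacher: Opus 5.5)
Your Steps 1--3 are the paper's proof. The paper writes $1-q=\frac12\cdot\frac{2^{2m}+2^{m+1}-1}{2^{2m}}$, uses $r+1\le 2^m/2$ to get $1-q<\frac{r+2}{2(r+1)}=\frac1d$, and then simply invokes the closing chain of Theorem~\ref{success_prob}. Your bound $q\ge\frac12(1-2^{1-m})\ge\frac{r}{2(r+1)}$ is the same comparison. Stating $2^m\ge 2(r+1)$ explicitly is right: the paper asserts it without proof. With $m=\lceil\log_2 r\rceil+1$ (the choice in its complexity theorem) the condition fails exactly when $r$ is a power of two, and then indeed $1-q>\frac1d$.

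Your criticism of the closing chain is correct, and it exposes a defect in the statement, not a gap in your plan.

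The logarithm inequality is only a sign typo. $\ln x>\frac12\left(x-\frac1x\right)$ holds on $(0,1)$ and gives exactly the paper's $\frac12\left(-\frac{1}{d^p-1}-\frac{1}{d^p}\right)$, which is slightly sharper than your $-\frac{1}{d^p-1}$.

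The exponent step genuinely fails. The inequality $(1-d^{-p})^{\lceil\log_2 r\rceil}>(1-d^{-p})^{p+\log_2 p}$ needs $\lceil\log_2 r\rceil<p+\log_2 p$, which is the reverse of the hypothesis. In fact the second inequality of the theorem is false under the stated hypotheses alone. Take $r=2^{12}$ and $p=1$, so $p+\log_2 p=1\le 12$. The middle term is $q^{12}<2^{-12}$, while $e^{-2p/(d^p-1)}\approx e^{-2}\approx 0.135$. For $r=2^{12}$ it fails for every $p\le 5$; at $p=5$ one gets about $0.683$ against $0.724$.

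So an extra hypothesis is unavoidable, not merely convenient. What your argument actually uses is $\lceil\log_2 r\rceil\le 2p$. With it, $\ln(1-y)>-\frac{y}{1-y}$ at $y=d^{-p}$ gives $\left(1-d^{-p}\right)^{\lceil\log_2 r\rceil}>e^{-\lceil\log_2 r\rceil/(d^p-1)}\ge e^{-2p/(d^p-1)}$. The paper's illustrative choice $r=2^{12}$, $p=8$ satisfies this condition.

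One caveat you share with the paper concerns Step 2. The accounting there is taken over from the paper without justification: at most $\lceil\log_2 r\rceil$ set tests, with the output correct whenever those tests are. Algorithm~\ref{algorithm_strategy} as written does not support it. Every False answer triggers a further test at the same level, and the loop exits after a True on a two-element set $\{\tau,\tau+1\}$. Your first inequality therefore rests on the same idealized model as the paper's and is exactly as rigorous, no more.
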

\begin{proof}
    Knowing that we need to observe $0^{m-1}1$ for at least $1$ time. The probability of obtaining no $1$ in algorithm \ref{DQDLA} is 
    \begin{equation}
        (1-\frac{(2^m-1)^2}{2^{2m+1}})^{p}.
    \end{equation}
    Thus we have the probability of observing $0^{m-1}1$ at least $1$ time is 
    \begin{equation}
    \begin{split}
        &(1-(1-\frac{(2^m-1)^2}{2^{2m+1}})^{p})^{\lceil \log_2 r \rceil} \\
        =& (1-(\frac{2^{2m+1}-(2^m-1)^2}{2^{2m+1}})^p)^{\lceil \log_2 r \rceil}\\
        =& (1-(\frac{2^{2m}+2^{m+1}-1}{2^{2m+1}})^p)^{\lceil \log_2 r \rceil}\\
        =& (1-(\frac{1}{2}\frac{2^{2m}+2^{m+1}-1}{2^{2m}})^p)^{\lceil \log_2 r \rceil}.
    \end{split}
    \end{equation}
    Compared with the result in Theorem \ref{success_prob}, because $r+1 \leq \frac{2^m}{2}$ we have
    \begin{equation}
    \begin{split}
        \frac{r+2}{r+1} =& 1 + \frac{1}{r+1}
        \geq 1 + \frac{2}{2^m} \\
        > & 1 + \frac{2}{2^m} - \frac{1}{2^{2m}} = \frac{2^{2m}+2^{m+1}-1}{2^{2m}}.
    \end{split}
    \end{equation}
    Then according to the results of Theorem \ref{success_prob}, we have 
    \begin{equation}
        \mathcal{P}' > \left(1-(\frac{1}{2}\frac{2^{2m}+2^{m+1}-1}{2^{2m}})^{p}\right)^{\lceil \log_2 r \rceil} > \left(1 - \frac{1}{2^p} (\frac{r+2}{r+1})^p \right )^{\lceil \log_2 r \rceil}>\frac{1}{e^{\frac{2p}{d^p-1}}}.
    \end{equation}
\end{proof}
This theorem gives a lower bound of the success probability of our algorithm. For example, if given $r=2^{12}$, then we can set $p=8$, and the success probability $\mathcal{P}'$ will be greater than $89.24\%$, which is already a sufficiently high probability of success. 

\subsection{The implementation of $\Gamma_{\tau}(M_{a})$} \label{subsec:impl}
In this work, we introduce a gate $\Gamma_{\tau}(M_{a})$ with two control registers. Then, we will discuss how to implement it. 

As introduced by Shor \cite{DLP_first}, for integer $a$, we can design a gate such that for any integer $0 \leq z < N$,
\begin{equation}
    M_a|z\rangle = |za \pmod{N}\rangle.
\end{equation}
And 
\begin{equation}
    \Lambda(M_a)|x\rangle|z\rangle = |x\rangle (M_a)^x|z\rangle.
\end{equation}

For any $U: \{0,1\}^m \times \{z\in \mathbb{Z}|0 \leq z < N\} \to \{0,1\}^m \times \{z\in \mathbb{Z}|0 \leq z < N\}$, we further denote a quantum gate
\begin{equation}
    \Xi(U):|s\rangle|x\rangle|z\rangle \to |s\rangle U^s(|x\rangle|z\rangle).
\end{equation}
Therefore, we have
\begin{equation}
\begin{split}
    \Xi(\Lambda(M^\dagger_a))(|s\rangle|x\rangle|z\rangle) =& |s\rangle (\Lambda(M^\dagger_a))^s(|x\rangle|z\rangle)\\
    =&|s\rangle|x\rangle(M^\dagger_a)^{sx}|z\rangle\\
    =&|s\rangle|x\rangle|za^{-sx} \pmod{N}\rangle.
\end{split}
\end{equation}
With this gate, we can construct our $\Gamma_{\tau}(M^\dagger_{a})=(I^{\otimes n} \otimes \Lambda(M^\dagger_{a^\tau}) \Xi(\Lambda(M^\dagger_a))$:
\begin{equation}
\begin{split}
    \Gamma_{\tau}(M^\dagger_{a})(|s\rangle|x\rangle|z\rangle)
    =& (I^{\otimes n} \otimes \Lambda(M^\dagger_{a^\tau}) \Xi(\Lambda(M^\dagger_a)) (|s\rangle|x\rangle|z\rangle)\\
    =& (I^{\otimes n} \otimes \Lambda(M^\dagger_{a^\tau}) (|s\rangle|x\rangle|za^{-sx} \pmod{N}\rangle)\\
    =&|s\rangle|x\rangle(M^\dagger_{a^\tau})^x|za^{-sx} \pmod{N}\rangle\\
    =&|s\rangle|x\rangle|za^{-(s+\tau)x} \pmod{N}\rangle,
\end{split}
\end{equation}
which can be applied in $O(nm^3)$ time.

\section{Complexity}
After analyzing the success probability of our algorithm, we introduce the time, space, and communication complexity. The results are summarized in Table \ref{tab:compare}, which represents that our algorithm can reduce the Qubit number and enhance the success probability without significantly increasing the circuit depth compared with Shor's algorithm \cite{DLP_first}. Compared with the semi-classical QFT-based methods \cite{DLP_SCQFT}, our algorithm has shallower circuit depth and requires no mid-circuit measurement, which further deepens circuit depth and brings additional operations. Furthermore, our algorithm can work without heuristics or assumptions except the basic setting of DLPs.

\begin{table}[t]
    \centering
    \small
    \setlength{\tabcolsep}{4pt}
    \renewcommand{\arraystretch}{1.15}
    \caption{Comparison between our algorithm and other existing methods.}
    \label{tab:compare}
    \begin{tabular}{p{2.6cm}|c c p{3.2cm} p{2.2cm}}
        \toprule
        \centering Algorithm
        & \centering Circuit depth
        & \centering Qubit number
        & \centering Success probability $(r\to\infty)$
        & \centering Assumption \tabularnewline
        \midrule
        \centering Shor's \cite{DLP_first}
        & $O(m^3)$
        & $O(3m)$
        & \centering $>0.657$
        & \centering Basic \tabularnewline

        \centering Semi-classical QFT methods \cite{DLP_SCQFT}
        & $O(m^4)$
        & $O(2m)$
        & \centering unknown
        & \centering With mid-circuit measure \tabularnewline

        \centering Chevignard et al. \cite{DLP_reducing}
        & $O(m^2 \log^3(m))$
        & $O(\frac{7m}{2})$
        & \centering $>0.9$
        & \centering Heuristic \tabularnewline

        \centering Ours
        & $O(nm^3)$
        & $O(2m+n+1)$
        & \centering $>\exp\!\left[-\frac{2p}{2^p-1}\right]$
        & \centering Basic \tabularnewline
        \bottomrule
    \end{tabular}
\end{table}

\subsection{Time and space complexity}
In quantum computing, the key influencing factors include circuit depth, the number of qubits, and the success probability. The circuit depth and the number of qubits are the fundamental limits of today's quantum computers \cite{NISQ}, while the success probability shows the reliability of our algorithm.

The following theorem describes the time and space complexity and circuit depth we need in Algorithm \ref{algorithm_strategy}:

\begin{theorem}
    Given $a,b,r,N \in \mathrm{Z}^+$, $a^r=1 \pmod{N}$ and promise exist $t$ such that $a^t = b \pmod{N}$. Let $m=\lceil \log_2 r \rceil+1$, $n<m-1$, then the quantum process (Algorithm \ref{DQDLA}) requires $O(n m^3)$ circuit depth and at most $O(2m+n+1)$ qubits, while our Algorithm \ref{algorithm_strategy} cost total $O(m^5n2^n/K)$ time to find the promised $t$ with $K$ QPU.
\end{theorem}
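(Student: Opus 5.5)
We bound the circuit depth gate by gate, count qubits register by register, and then multiply the per-run cost by the number of runs that Algorithm \ref{algorithm_strategy} makes, divided among the $K$ QPUs.

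\textbf{Circuit depth.} We go through the steps of Algorithm \ref{DQDLA} in order.
\begin{itemize}
\item The Hadamard layers have depth $O(1)$.
\item $QFT_{2^m}$ and $QFT^\dagger_{2^m}$ each have depth $O(m^2)$.
\item $U_g$ is a multi-controlled rotation testing whether the second register is $0$. It needs $O(m)$ Toffoli-type gates.
\item $\Lambda(M_b)$ is the standard controlled modular exponentiation of Shor's algorithm \cite{DLP_first}. It uses $m$ controlled modular multiplications, each of depth $O(m^2)$, giving $O(m^3)$.
\item For $\Gamma_\tau(M_a)$ we use the decomposition of Subsection \ref{subsec:impl}: $\Lambda(M_{a^\tau})$ followed by $\Xi(\Lambda(M_a))$. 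The factor $\Lambda(M_{a^\tau})$ is one more modular exponentiation of cost $O(m^3)$. For $\Xi(\Lambda(M_a))$ we write $s=\sum_j s_j2^j$ and apply $\Lambda(M_{a^{2^j}})$ controlled on the bit $s_j$. These $n$ controlled exponentiations cost $O(nm^3)$ together, and the classical precomputation of $a^{2^j} \bmod N$ is free.
\item The inverse gates cost the same as the forward ones.
\end{itemize}
Summing gives depth $O(nm^3)$.

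\textbf{Qubits.} The first register has $n$ qubits, the second $m$, the third $m$ (enough because $N$ fits in $m$ bits under the paper's convention), and the fourth is a single ancilla. The total is $2m+n+1$. Any workspace needed for the modular multiplications is absorbed into the constant, as in the paper's count for Shor's algorithm.

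\textbf{Total time.} Algorithm \ref{algorithm_strategy} changes state only in two ways: it decrements $n$, or it shifts $\tau$ by $2^n$. The decrement happens at most $n$ times. At a fixed size $2^n$, the window $\tau$ can be shifted at most $r/2^n$ times before it covers $\mathbb{Z}_r$. However, after the first successful containment test, the search at every later level is confined to the previous set, which leaves two candidates per level.
\begin{itemize}
\item The dominant stage is the initial scan for a set containing $t$. It needs $O(r/2^n)$ sets of size $2^n$.
\item Each set is tested with $p$ repetitions, and $p=O(\log r)=O(m)$ by the constraint $p+\log_2 p\le \log_2 r$.
\item Each repetition costs $O(nm^3)$.
\end{itemize}
The tests for different $\tau$ are independent, so they can be distributed across the $K$ QPUs with only classical communication. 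This gives $O(nm^3\cdot m\cdot m\cdot \#\text{sets}/K)$, where the extra factor $m$ covers the $O(m)$ refinement levels and repeated classical post-processing.

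\textbf{Main obstacle.} The hard step is reconciling the number of sets with the claimed $2^n$ factor. The straightforward count of the scan gives $r/2^n\approx 2^{m-n}$, not $2^n$. I would first try to argue via the parameter choice that the counts match: choose $n$ near $m/2$, so that $r/2^n=O(2^n)$. This is consistent with the $\sqrt{2^n}$ form stated in the introduction. If the statement is meant with $n$ free, I would instead bound the scan by $O(\max(2^n, r/2^n))$ and report that bound.
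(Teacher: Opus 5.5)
Your depth and qubit counts agree with the paper's; the paper also builds $\Gamma_\tau(M_a)$ from $\Lambda(M_{a^\tau})$ and $\Xi(\Lambda(M_a))$ and charges $O(nm^3)$ for the latter. The gap is in the total-time count. You charge $O(nm^3)$ per repetition, but one execution of Algorithm~\ref{DQDLA} yields a usable outcome only when the fourth register is measured as $1$. By Proposition~\ref{t_in}, that postselection succeeds with probability only about $1/2^n$, so each of the $p$ results per set costs $O(2^n)$ expected executions. This postselection overhead is where the paper's factor $2^n$ comes from:
\[
O(nm^3)\ \text{per run}\;\times\; O(2^n)\ \text{runs per sample}\;\times\; p=O(m)\ \text{samples}\;\times\; O(\log_2 r)=O(m)\ \text{sets}\;=\;O(m^5n2^n).
\]
Distributing the tests for different $\tau$ over the $K$ QPUs then gives the $1/K$.

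Because you omit that factor, you try to extract $2^n$ from the number of windows, and neither of your fallbacks proves the statement. Setting $n\approx m/2$ adds a hypothesis the statement does not have. Moreover, once the postselection cost is restored, scanning $r/2^n$ windows costs about $(r/2^n)\cdot p\cdot 2^n=p\,r$ runs, i.e.\ $O(nm^4r)$ time. For $2^n\approx\sqrt r$ this is far above $O(m^5n2^n)$. Your alternative bound $O(\max(2^n,r/2^n))$ is simply a different claim.

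Your remark that the initial scan can need $r/2^n$ windows is a fair criticism of the paper's unqualified ``$O(\log_2 r)$ sets''. However, the reading under which the claimed bound holds goes the other way. Take $2^n$ within an $O(\log r)$ factor of $r$; for example, $n=m-2$ gives $2^n\ge r/2$, so the scan needs $O(1)$ windows. The halving phase then contributes at most two tests per level, i.e.\ $O(\log_2 r)$ sets, and the $2^n$ comes entirely from postselection on the fourth register.
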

\begin{proof}
    The time complexity of our algorithm can be computed as the combination of each gate. Knowing that each Hadamard gate costs $O(1)$, the $\Lambda(M_b)$ gates can be applied in $O(m^3)$ times, and the $QFT$ takes $O(m^2)$. Then, our $\Gamma_{\tau}(M_a)$ can be implemented via method shown in section \ref{subsec:impl}. Following our implementation, $\Xi(\Lambda(M^\dagger_a))$ costs $O(nm^3)$ time because $\Lambda(M^\dagger_a)$ takes $O(m^3)$ and the first register contains $n$ qubits and $U_{g}$ only take constant time. Thus, the quantum part of our algorithm (Algorithm \ref{DQDLA}) can query whether $t$ is contained in each set $\tau$ in $O(2m^3+2nm^3+2m^2+2m+2n+2)$ time, which is $O(nm^3)$.   
    
    Knowing that the probability of observing $1$ from the fourth register follows $O(\frac{1}{2^n})$, we need $O(2^n)$ averagely. To locate a specific $t$, our algorithm needs to query $O(\log_2 r)$ sets at most. In our paper, the number of queries has the same order with $O(m)$ and the iteration time follows $p=O(\log_2 r)$. Although the algorithm requires multiple iterations, the time for each iteration is short. Therefore, the circuit depth will not be significantly increased as a result. 
    
    Thus, the total time complexity is $O(m^5n2^n)$ (short for $O(m^3  (\log_2 r)^2 n 2^n)$) if our algorithm runs serially. Moreover, if we have $K$ QPUs, our time complexity can be reduced to $O(m^5n2^n/K)$ because the quantum part can run in parallel, e.g., the quantum algorithms for each set $S_{n,\tau}$. 

    For space complexity, our four registers contain $n,m,m,1$ qubits separately. Thus, our algorithm costs $O(2m+n+1)$ space. Once we limit the size of the set $\tau$ by letting $n-1<m$, our algorithm has lower space complexity than Shor's algorithm (space complexity $O(3m)$).
\end{proof}

Additionally, our Algorithm \ref{DQDLA} can be accelerated via Quantum Amplitude Amplification Algorithm \cite{AA}. In this case, the Algorithm \ref{DQDLA} for different $S_{n,\tau}$ can still run parallel. Therefore, we have the total time complexity as $O(m^5n\sqrt{2^n}/K)$ when we split each set into  $K$ subsets which contains $2^n$ elements. However, applying Quantum Amplitude Amplification Algorithm leads to a deeper quantum circuit. Therefore, this scheme can be regarded as a trade off to the original one.

\subsection{Communication complexity}
Our distributed algorithm requires no quantum communication. In this section, we are discussing the classical communication required in our algorithm. If we run our algorithm serially, the latter QPU should know two lines of information from the former: First, the QPU should know which sets have been queried; Second, whether the solution $t$ is contained in the former queried set. Knowing that each of the sets can be described with two integers $\tau$ and $n$, we apply a $1$-bit integer to represent whether the solution is contained in the set. $\tau$ has the same scale as $r$ and $n$ has the same scale with $\log_2 r$. Therefore, the algorithm requires $O(\log_2 ((r+\log_2 r) \log_2 r))$ bits classical communications. It can also be written as $O(\log_2 (r \log_2 r))$ for short.

\section{Numerical results}
In this section, we are applying our algorithm onto a Small-scale discrete logarithm problem and presenting the numerical results. 

\begin{figure}
    \centering
    \includegraphics[width=0.8\linewidth]{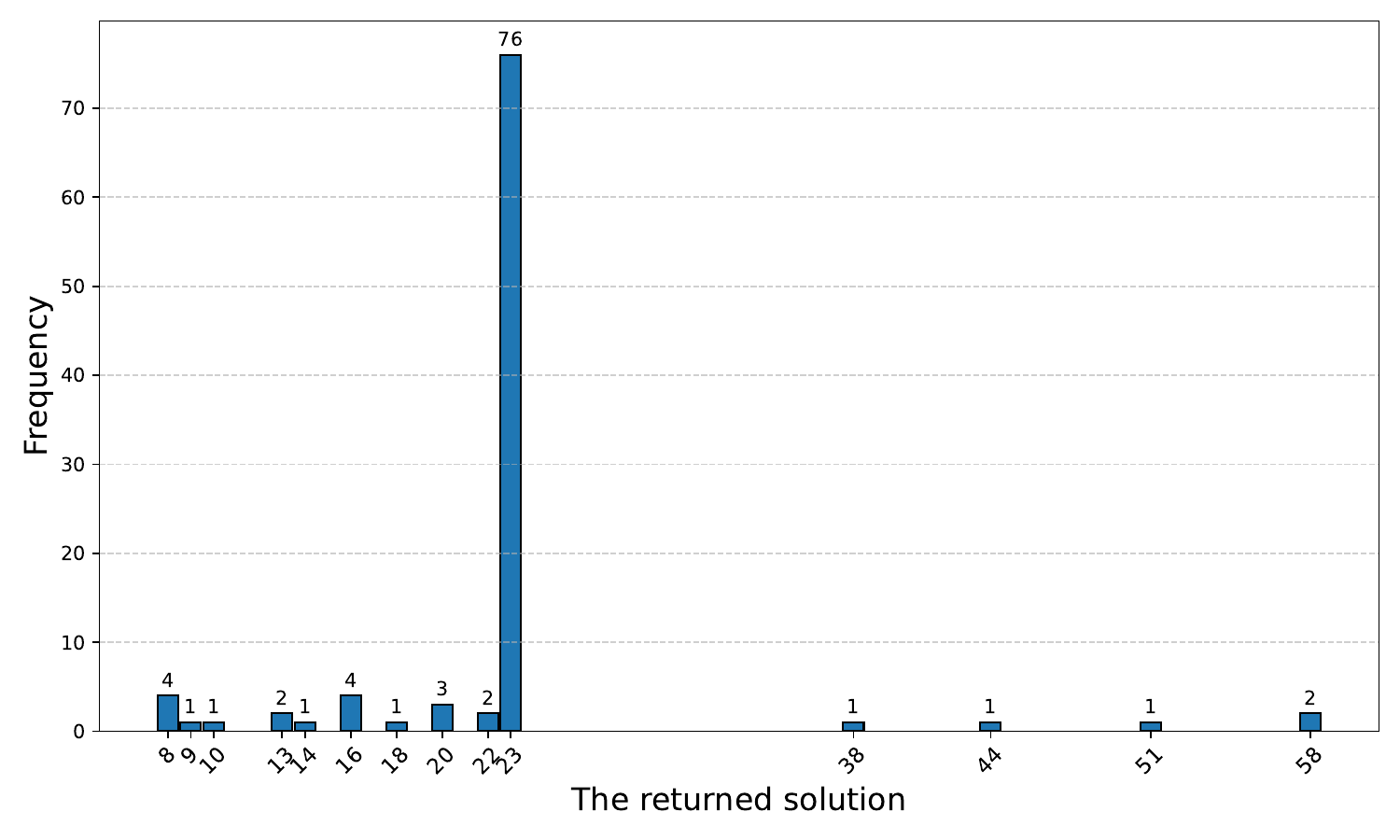}
    \caption{The frequency of the solutions ($t=23$) found by Algorithm \ref{algorithm_strategy} for $100$ times.}
    \label{fig:Barplot_Frequency}
\end{figure}

\begin{figure}
    \centering
    \includegraphics[width=1\linewidth]{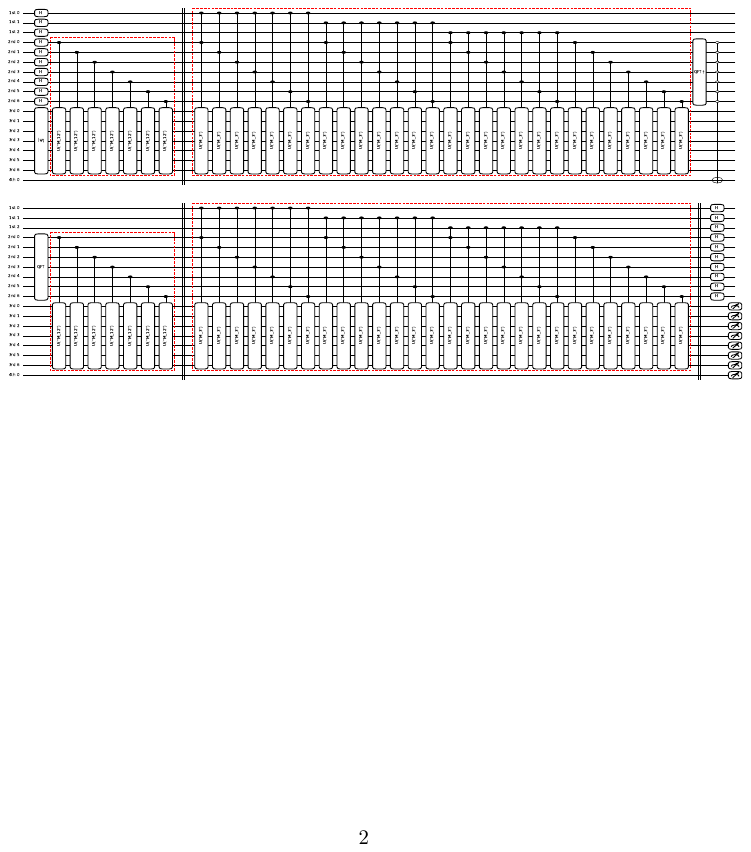}
    \caption{The circuit constructed via Pennylane.}
    \label{fig:Circuit_total}
\end{figure}

The discrete logarithm problem is defined with the following numbers: Given $a=3, b=12, N=71$, and therefore $r=35$, the purpose of our Algorithm is to find $t$ such that $3^t = 12 \pmod{71}$ ($t=23$ as a result). According to the previous analysis, we set $m=7$ (with $\epsilon=0.5$), $n=3<m-1$, $p=2$. Therefore, after constructing $M_a$ and $M_b$ and other gates via Pennylane \cite{pennylane}, a sample of circuits we construct is shown in Figure \ref{fig:Circuit_total}.

We run our Algorithm \ref{algorithm_strategy} for $100$ times and the frequency of the solutions is shown in Figure \ref{fig:Barplot_Frequency}. From Figure \ref{fig:Barplot_Frequency}, the frequency of our algorithm for finding the correct answer $t=23$ is $76$. Thus, our algorithm can find the answer for DLP in most cases. Under such setting of small scale problem, the theoretical lower success probability bound is only $\mathcal{P}>0.2380$, which introduces a significant gap between the theoretical bound and numerical result. The reason why this gap exists is that our analysis ignores some special cases. For example, when $n=3$ and $\tau=0$, our Algorithm \ref{DQDLA} accidentally returns True and then we set $n=2$ and $\tau=0$. However, our Algorithm \ref{DQDLA} returns all False until returning True at $n=2$ and $\tau=20$ and puts the Algorithm \ref{algorithm_strategy} back on track again. Those effects can lead to additional correct solutions beyond theoretical analysis.

\begin{figure}
    \centering
    \includegraphics[width=1\linewidth]{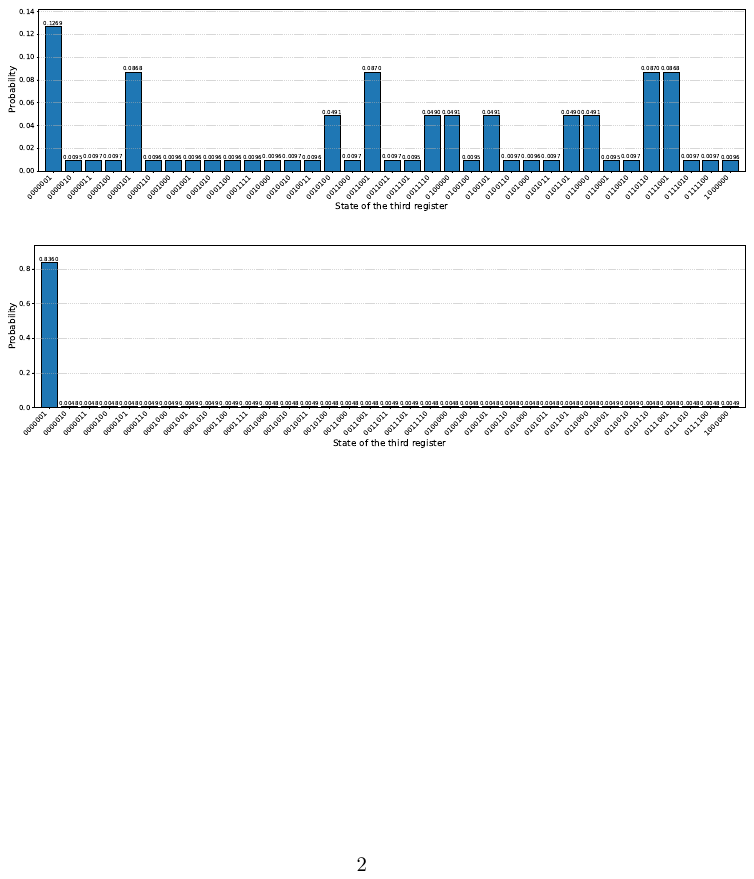}
    \caption{The probability of observing different strings from the third register when $t \not\in S_{n,r}$ (upper figure) and $t\in S_{n,r}$ (lower figure).}
    \label{fig:Barplot_prob}
\end{figure}

Then, we compare the probability of observing $0^{m-1}1$ at $n=3$ and $\tau=0$ or $20$ from the third register in Figure \ref{fig:Barplot_prob}. If $t \not\in S_{n,r}$ ($\tau=0$), the probability of observing $0^{m-1}1$ is $0.1269$. It is worth mentioning that this probability will continue to decrease if the scale of the problem increases. In the meantime, if $t \in S_{n,r}$ ($\tau=20$), the probability of observing $0^{m-1}1$ is $0.8360$, which is significant large comparing with $0.1269$. Therefore, our algorithm can easily find the appropriate $S_{n,r}$ that contains $t$.

\section{Conclusion}
Quantum discrete logarithm algorithms have received significant attention in recent years. However, the success probability and problem size are insufficient. In this paper, we proposed a distributed quantum algorithm for the discrete logarithm problem. Compared with the original algorithm, our method requires fewer qubits and can provide a higher success probability. The main idea is to exploit the fact that, when the fourth register is measured with outcome 1, the quantum state of the third register becomes distinguishable depending on whether the solution is contained in a given set. Because the set size can be flexibly adjusted, the algorithm can be distributed across multiple independent computational nodes. This feature makes the proposed framework both scalable and adaptable. At the same time, the current method still has limitations, since the probability of obtaining the desired outcome in the fourth register may remain low. Future work will focus on improving this probability by designing better mechanisms for preparing distinguishable superposition states, as well as extending the same idea to other computational problems.

%%
%% Bibliography
%%

%% Please use bibtex, 
\bibliographystyle{ieeetr}
\bibliography{ref}

\end{document}